\newtheorem{theorem}{Theorem}
\newtheorem{claim}{Claim}
\newtheorem{assert}{Assertion}
\newtheorem{conj}{Conjecture}
\newtheorem*{theoremvar}{Theorem 1'}
\title{Induced matchings in graphs of degree at most $4$}
\author{Viet Hang Nguyen \\ \footnotesize{EPFL,
Switzerland}\\\footnotesize{viethang.nguyen@epfl.ch}}
\date{}
\begin{document}

\maketitle

\begin{abstract}
We show that if $G$ is a connected graph of maximum degree at most $4$, which is
not $C_{2,5}$, then the strong matching number of $G$ is at least
$\frac{1}{9}n(G)$. This bound is tight and the proof implies a polynomial time
algorithm to find an induced matching of this size.
\end{abstract}

\section{Introduction}

An \emph{induced matching} of a graph $G=(V,E)$ is an edge set $M\subseteq E$
such that each vertex
of $G$ is incident to at most one edge in $M$ (i.e., $M$ is a matching) and if
$ab,cd \in M$ then none of the edges $ac, ad,bc,bd$ is in $E$. The
maximum size of an induced matching in $G$ is called the \emph{strong matching
number} of $G$ and is denoted by $\nu_s(G)$. 

While a maximum matching can be efficiently found in every graph
\cite{Edmonds65}, the problem of computing the strong matching number is
NP-hard even in quite restricted classes. It is proved to be NP-hard for
subcubic bipartite graphs \cite{SV82,Cameron89,Lozin02}, $C_4$-free
bipartite graphs
\cite{Lozin02}, line graphs \cite{KR03} or cubic planar graphs \cite{DWZ05}. In
fact, even for $3$-regular bipartite graphs, there is some constant $c>1$ such
that the prolem cannot be approximated within a factor of $c$ unless
$\text{P}=\text{NP}$ \cite{Zito99}.

On the positive side, a maximum induced matching can be found efficiently in
several classes of graphs such as weakly chordal graphs \cite{CST03}, AT-free
graphs \cite{Chang03}, graphs of bounded clique-width \cite{KR03}, and several
other classes \cite{BES07,BH08,Cameron04,GL93, GL00, Lozin02}.

One direction in recent research on induced matching is to lower bound the
strong matching number of a graph $G$ in terms of its maximum degree
$\Delta(G)$ and its order $n(G)$ or its number of edges $m(G)$. Let $G$ be a
connected graph, an easy observation \cite{Zito99} yields
\[
 \nu_s(G)\geq \frac{n(G)}{2(2\Delta(G)^2-2\Delta(G)+1)}.
\]
Joos \cite{Joos14} proved a sharp bound for $\Delta$ sufficiently large
\[
 \nu_s(G)\geq \frac{n(G)}{(\lceil \frac{\Delta}{2}\rceil+1)(\lfloor
\frac{\Delta}{2}\rfloor+1)}.
\]
He conjectured that this bound holds for all $\Delta\geq 3$ except for $G\in
\{C_{2,5},K_{3,3}^+\}$. Here $K_{3,3}^+$ is the graph obtained from $K_{3,3}$ by
subdividing an edge by a new vertex and $C_{2,5}$ is the graph obtained from
$C_5$
by separating each vertex into two non adjacent vertices (see
Figure \ref{fig:11}).
It is easy to see that $K_{3,3}^+$ is a subcubic graph,
$C_{2,5}$ is a $4$-regular graph and $\nu_s(K_{3,3}^+)=\nu_s(C_{2,5})=1$.

\begin{figure}[!h]
 \centering
 \includegraphics[width=.5\linewidth]{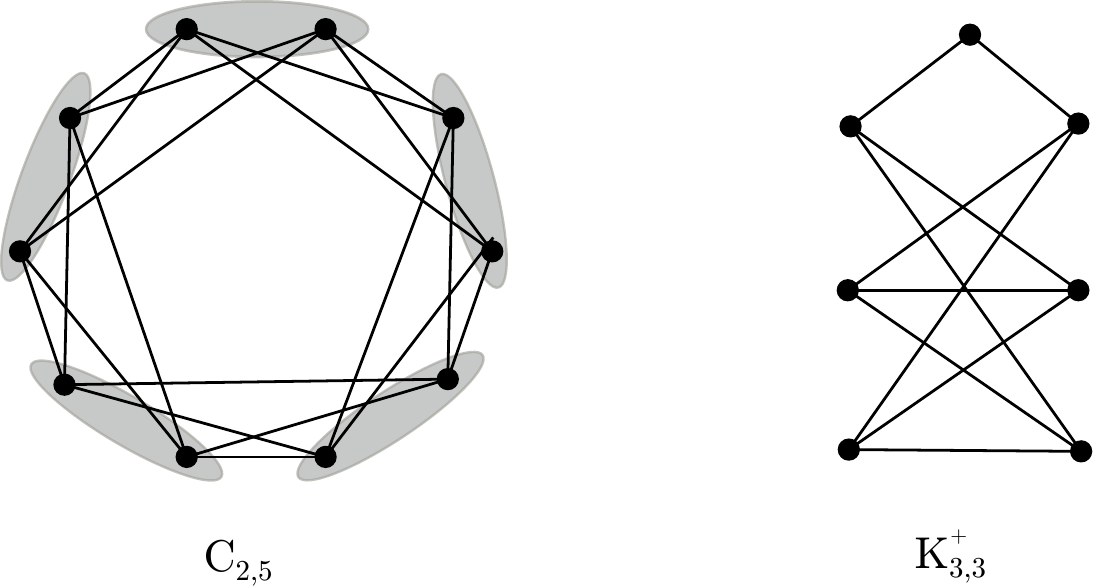}
 \caption{$C_{2,5}$ and $K^+_{3,3}$}
 \label{fig:11}
\end{figure}

For connected subcubic graphs, Joos, Rautenbach and Sasse \cite{Joos14} showed
that
$\nu_s(G)\geq \frac{n(G)}{6}$ if $G\neq K_{3,3}$. This result, proved by simple
local reduction, strengthens an earlier lower bound $\nu_s(G)\geq
\frac{1}{9}m(G)$ in \cite{KMM12} for subcubic planar graphs.

This research direction seems to be inspired by a conjecture of Erd\H{o}s and
Ne\v{s}et\v{r}il on the \emph{strong chromatic number} $\chi'_s(G)$, i.e., the
minimum number of induced matchings of $G$ into which $G$ can be partitioned. 

\begin{conj}
 \label{conj:EN}
 If $G$ is a connected graph with maximum degree $\Delta$ then
 \[
 \chi'_s(G)\leq \left\{ 
  \begin{array}{l l}
   \frac{5}{4}\Delta^2, &\text{$\Delta$ is even;}\\[.7em]
   \frac{1}{4}(5\Delta^2-2\Delta+1), &\text{$\Delta$ is odd}.
  \end{array}
\right.
 \]
\end{conj}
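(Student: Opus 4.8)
My plan begins by reformulating the invariant: $\chi'_s(G)$ is exactly the ordinary chromatic number of the square of the line graph, $\chi'_s(G)=\chi\bigl(L(G)^2\bigr)$, since an induced matching of $G$ is precisely an independent set of $L(G)^2$. In $L(G)^2$ the vertex corresponding to an edge $uv$ is joined to the at most $2(\Delta-1)$ edges sharing an endpoint with $uv$ and to the at most $2(\Delta-1)^2$ edges at distance two, so $\Delta\bigl(L(G)^2\bigr)\le 2\Delta^2-2\Delta$ and the greedy bound already gives $\chi'_s(G)\le 2\Delta^2-2\Delta+1$. The conjecture asks to replace the leading constant $2$ by $\tfrac54$, and this is best possible: the blow-up of $C_5$ obtained by replacing each vertex with an independent set of size $\lceil\Delta/2\rceil$ or $\lfloor\Delta/2\rfloor$ and joining consecutive sets completely forces all of its edges into distinct colour classes. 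For $\Delta=4$ this extremal graph is exactly the $C_{2,5}$ of Figure~\ref{fig:11}, whose $20=\tfrac54\cdot 4^2$ edges are pairwise within distance two, so $\chi'_s(C_{2,5})=20$ meets the bound.

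For large $\Delta$ I would attack the conjecture with the semi-random (``nibble'') method in the style of Molloy and Reed. The idea is to colour the edges in rounds: in each round assign every still-uncoloured edge a uniformly random colour from a palette of size $(2-\varepsilon)\Delta^2$, keep the colour of an edge only if no edge within distance two received the same colour, and use the Lov\'asz Local Lemma to show that with positive probability the uncoloured remainder has strictly smaller local density. Iterating this and finishing the sparse leftover greedily yields $\chi'_s(G)\le(2-\varepsilon)\Delta^2$ for some absolute $\varepsilon>0$. The quantitative heart of such an argument is \emph{local sparsity}: one must estimate how many pairs of edges at distance two in $G$ are already non-adjacent in $L(G)^2$, i.e.\ how many colour repetitions each neighbourhood gets ``for free,'' and convert that into saved colours.

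For small $\Delta$, and in particular the $\Delta=4$ case most relevant to this paper, I would instead argue structurally by discharging and local reduction, in the spirit of the simple local reductions used above for the subcubic induced-matching bound. Fix a hypothetical minimal counterexample $G$; show that $G$ cannot contain a list of reducible local configurations (low-degree vertices with few incident edges at distance two, repeated neighbourhoods, short separating structures that can be recoloured after deletion); then run a discharging phase forcing $G$ to coincide, away from a bounded boundary, with a $C_5$-blow-up, which contradicts minimality unless $G$ is the extremal graph itself. For $\Delta=3$ this programme is known to give the sharp value $\chi'_s\le 10$, and the target for $\Delta=4$ is $\chi'_s\le 20$ with $C_{2,5}$ as the unique tight graph.

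The main obstacle is that these two regimes do not meet. The probabilistic method comfortably beats $2\Delta^2$ but, with present techniques, stalls well above $\tfrac54\Delta^2$, because the extremal $C_5$-blow-up is genuinely dense at distance two and offers no generic sparsity to exploit; any argument attaining the sharp constant must detect and use the \emph{exact} manner in which a given neighbourhood departs from this blow-up structure, rather than an averaged gain. Bridging this gap for every $\Delta\ge 3$ is precisely the content of the Erd\H{o}s--Ne\v{s}et\v{r}il conjecture, which remains open beyond the verified small cases; I therefore expect the plan above to reach constants between $\tfrac54$ and $2$ and to settle individual small $\Delta$, but not to resolve the conjecture in full.
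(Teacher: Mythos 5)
The statement you were asked about is Conjecture~\ref{conj:EN}, the Erd\H{o}s--Ne\v{s}et\v{r}il conjecture on the strong chromatic index. The paper does not prove it and does not claim to: it is quoted purely as motivation, with the remark that the best known general bound is $1.998\Delta^2$ for large $\Delta$ (Molloy--Reed) and that only the subcubic case $\Delta=3$ is settled. There is therefore no proof in the paper to compare yours against, and your proposal --- candidly --- is a research programme rather than a proof. Your factual scaffolding is accurate: $\chi'_s(G)=\chi\bigl(L(G)^2\bigr)$; the degree of $L(G)^2$ is at most $2\Delta(\Delta-1)=2\Delta^2-2\Delta$, so greedy gives $2\Delta^2-2\Delta+1$; the balanced $C_5$-blow-up is the conjectured extremal configuration, and for $\Delta=4$ it is exactly the $C_{2,5}$ of this paper, whose $20$ edges pairwise conflict (equivalently, $\nu_s(C_{2,5})=1$ as the paper notes), so $\chi'_s(C_{2,5})=20=\tfrac54\cdot 4^2$; and the formula $\tfrac14(5\Delta^2-2\Delta+1)$ gives the proved value $10$ at $\Delta=3$.

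The genuine gap is simply that neither the nibble regime nor the discharging regime you sketch is known to close the conjecture, and you should be careful not to overstate the small-$\Delta$ side: the $\Delta=4$ case ($\chi'_s\le 20$) is \emph{not} a routine extension of the Andersen/Hor\'ak--Qing--Trotter arguments and remains open (known upper bounds for $\Delta=4$ sit strictly above $20$), so your discharging plan is aspirational there too. One further point of orientation: do not conflate this conjecture with what the paper actually proves. The paper's Theorem~\ref{thm:main} is a lower bound on the strong matching number $\nu_s$, and the logical traffic runs in the opposite direction --- the Erd\H{o}s--Ne\v{s}et\v{r}il conjecture would \emph{imply} $\nu_s(G)\geq \frac{2n(G)}{5\Delta}$ for $\Delta$-regular graphs of even degree, a bound that Joos' conjecture (proved here for $\Delta=4$) strengthens. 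Proving Theorem~\ref{thm:main} by the paper's local reductions gives no leverage on $\chi'_s$, since a large induced matching does not yield a partition of $E(G)$ into few induced matchings.
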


  The currently best known upper bound for the strong chromatic number is
$\chi'_s(G)\leq 1.998\Delta^2$ when $\Delta$ is sufficiently large, due to
Molloy and Reed \cite{MR97}. Conjecture \ref{conj:EN} is proved for subcubic
graphs in \cite{Andersen92,HQT93}.

The conjecture of Erd\H{o}s and Ne\v{s}et\v{r}il, if true, implies that for a
regular graph $G$ of even degree $\Delta$ we have $\nu_s(G)\geq
\frac{2n(G)}{5\Delta}$, as observed in \cite{JRS14}.
Note that Joos' conjectured bound strengthens this bound for $\Delta=4$ and
the result of Joos, Rautenbach and Sasse \cite{JRS14} confirms Joos' conjectured
bound for $\Delta=3$.

In this paper we prove the conjecture of Joos for $\Delta=4$, namely, we prove
the following.
\begin{theorem}
\label{thm:main}
 Let $G\neq C_{2,5}$ be a connected graph with maximum degree at most $4$.  Then
the strong matching number of $G$ is at least $\frac{1}{9}n(G)$. 
\end{theorem}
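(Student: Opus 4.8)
The plan is to argue by induction on $n(G)$, or equivalently to fix a counterexample $G$ minimizing $n(G)$ and derive a contradiction. The engine of the induction is a single \emph{local reduction}: pick an edge $uv$, put $S=N[u]\cup N[v]$, and note that since no vertex of $G-S$ is adjacent to $u$ or $v$, any induced matching of $G-S$ together with $uv$ is again an induced matching of $G$. Hence $\nu_s(G)\ge 1+\nu_s(G-S)$. Because $\Delta(G)\le 4$ we always have $|S|\le 8$ (indeed $|S|=8-c$, where $c$ is the number of common neighbours of $u$ and $v$, and $|S|$ is even smaller if $u$ or $v$ has degree below $4$). So if every component of $G-S$ met the target ratio we would get $\nu_s(G)\ge 1+(n-8)/9=(n+1)/9>n/9$. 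Thus in a minimal counterexample, deleting the closed neighbourhood of \emph{any} edge must leave behind a component that violates the bound.

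The next step is to classify these \emph{bad} components. Each component $C$ of $G-S$ is connected, has $\Delta(C)\le 4$, and has fewer vertices than $G$. If $C$ contains an edge and $|C|\le 9$, then $\nu_s(C)\ge 1\ge |C|/9$, so $C$ is not bad. If $C$ is subcubic, the subcubic bound $\nu_s(C)\ge |C|/6>|C|/9$ quoted in the introduction applies unless $C=K_{3,3}$, and $K_{3,3}$ satisfies $\nu_s=1\ge 6/9$; so subcubic components are never bad. Finally, if $|C|\ge 10$, $C$ has a degree-$4$ vertex, and $C\ne C_{2,5}$, then $C$ is a strictly smaller graph satisfying the hypotheses, so by minimality it already meets the bound. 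Consequently the only bad components that can occur are \emph{isolated vertices} and copies of $C_{2,5}$. This is precisely what makes $C_{2,5}$ the genuine exception, and it pins down exactly what the reduction must avoid.

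The heart of the proof — and the expected main obstacle — is to show that one can always choose the reducing edge $uv$ so that $G-(N[u]\cup N[v])$ contains neither an isolated vertex nor a $C_{2,5}$, unless $G$ itself is $C_{2,5}$. An orphaned vertex $w$ is a vertex at distance $2$ from $\{u,v\}$ all of whose neighbours lie in $S$; a \emph{single} such $w$ can simply be absorbed into $S$, giving $|S|\le 9$ and still yielding $\nu_s(G)\ge 1+(n-9)/9=n/9$, which contradicts $\nu_s(G)<n/9$. The difficulty is genuinely present only when deleting $N[u]\cup N[v]$ strands two or more vertices (or a whole $C_{2,5}$). Here I would run a case analysis on the degrees of $u$, $v$ and their neighbours: stranding many second-neighbours forces either common neighbours among $u,v$ and their neighbours (which lowers $|S|$ below $8$ and restores the slack) or additional disjoint edges that can be matched, letting one instead delete two induced-matching edges while removing at most $18$ vertices, so that $\nu_s(G)\ge 2+(n-18)/9=n/9$. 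Pushing this analysis through, the only configuration in which every edge is irreducible should be the locally dense, symmetric structure of $C_{2,5}$ itself.

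It remains to settle the finitely many small graphs that seed the induction (all connected graphs with $\Delta\le 4$ and few vertices, where $\nu_s\ge n/9$ is checked directly and $C_{2,5}$ is isolated as the unique failure), and to record that the argument is constructive: each reduction identifies a concrete edge and a bounded vertex set in polynomial time, and recursing on the components produces an induced matching of size at least $\frac{1}{9}n(G)$, which is the algorithmic claim of the abstract. Tightness of the constant $\tfrac19$ is then witnessed by stringing together suitable $9$-vertex gadgets, consistent with the extremal role of $C_{2,5}$.
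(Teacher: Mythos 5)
Your overall strategy --- minimal counterexample, the local reduction $\nu_s(G)\ge 1+\nu_s(G-(N[u]\cup N[v]))$ with $|N[u]\cup N[v]|\le 8$, classification of the possible bad components of the remainder as isolated vertices and copies of $C_{2,5}$, and the observation that a single deficient vertex can be absorbed into the deleted set --- is exactly the skeleton of the paper's argument. (The paper packages the bookkeeping slightly differently, proving the equivalent statement $\nu_s(G)\ge\frac19(n(G)-i(G))$ for graphs with no $C_{2,5}$ component; note also that since $C_{2,5}$ is $4$-regular, it can never occur as a component of a proper subgraph of a connected $G\ne C_{2,5}$ with $\Delta\le 4$, so the ``$C_{2,5}$ component'' branch of your classification is vacuous and only isolated vertices matter.) The problem is that everything you yourself label ``the heart of the proof'' is left as a declaration of intent rather than an argument. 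The sentence ``Pushing this analysis through, the only configuration in which every edge is irreducible should be the locally dense, symmetric structure of $C_{2,5}$ itself'' is essentially the theorem restated, not a proof of it. The paper needs eleven structural claims to get there: it first eliminates end-vertices (Claims 1--4), then degree-$2$ vertices (Claims 5--8), then triangles and $4$-cycles (Claims 9--11), each with its own ad hoc two-edge reduction deleting a carefully bounded vertex set; only once $\delta(G)\ge 3$ and girth at least $5$ are established does the clean ``no vertex is stranded by deleting $N(u)\cup N(v)$'' argument you describe actually close the proof. In particular the last step is not a soft symmetry argument: it is a delicate edge-by-edge reconstruction showing that if a certain induced subgraph has no induced matching of size $2$ then the adjacencies are forced one at a time until $G=C_{2,5}$ emerges.

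Two smaller points. First, your slack computation tolerates a total deficiency of only one vertex, so any configuration stranding two or more second-neighbours must be handled by a genuinely different reduction, typically deleting up to $18$ vertices against two matching edges. You gesture at this, but never verify that the $18$-vertex budget is achievable; this is exactly where the paper's inequalities of the form $d^{out}(X')\le d^{out}(X)-3i(G')+2|Y|$ do real work, and where the degree and girth restrictions from the earlier claims are indispensable --- without $\delta(G)\ge 3$ the count of stranded vertices cannot be controlled. Second, the appeal to the subcubic bound $\nu_s\ge n/6$ is unnecessary (minimality already covers every smaller connected component other than $C_{2,5}$) and imports an external theorem the argument does not need. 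As it stands, the proposal is a correct plan whose entire technical content is missing.
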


\section{Proof of the main theorem}
We first need some notations. For a subset $X$ of $V$ we denote by $G[X]$ the
subgraph of $G$ induced by the
vertices of $X$ and we use $G-X$ to denote $G[V-X]$. The number of isolated
vertices of a subgraph $H$ of $G$ is denoted by $i(H)$. For $X\subset V$ we
denote by
$d^{out}(X)$ the number of edges between $X$ and $V-X$. When $X=\{v\}$,
$d^{out}(X)$ is simply the degree of $v$ in $G$ and is written as $d(v)$. The
set of vertices adjacent to a vertex $v$ in $G$ is denoted by $N(v)$, noting
that
$|N(v)|=d(v)$. The minimum degree of a vertex in $G$ is denoted by $\delta(G)$.

In the remainder of this section we prove the following equivalent form of
Theorem \ref{thm:main}. 
\begin{theoremvar}
 Let $G$ be a graph with maximum degree at most $4$ and $G$ has no connected
component which is $C_{2,5}$. Then 
\[
 \nu_s(G)\geq \frac{1}{9}(n(G)-i(G)).
\]
\end{theoremvar}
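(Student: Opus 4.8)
The plan is to prove the statement by induction on $n(G)$, driven by one local reduction. Throughout write $N[x]=N(x)\cup\{x\}$.

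I first reduce to the case where $G$ is connected and has no isolated vertex. If $w$ is isolated then $\nu_s(G-w)=\nu_s(G)$, while $n(G-w)-i(G-w)=n(G)-i(G)$ and $G-w$ still satisfies the hypotheses, so the bound for $G$ follows from that for $G-w$. If $G$ is disconnected then $\nu_s$ and $n-i$ are both additive over components, each of which inherits the hypotheses (none is $C_{2,5}$), so it suffices to treat a single component. Hence I may assume $G$ is connected, $i(G)=0$, and $G\neq C_{2,5}$; in particular $n(G)\ge 2$. The base case is then immediate: if $n(G)\le 9$ then $G$ has an edge, so $\nu_s(G)\ge 1\ge n(G)/9$. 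So I may assume $n(G)\ge 10$.

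The inductive step rests on finding an edge $uv$ with $S:=N[u]\cup N[v]$ satisfying $|S|+i(G-S)\le 9$. Since every vertex of $G-S$ lies outside $N(u)\cup N(v)$, any induced matching of $G-S$ together with $uv$ is an induced matching of $G$, so $\nu_s(G)\ge\nu_s(G-S)+1$. Moreover $G-S$ has no $C_{2,5}$ component: as $C_{2,5}$ is $4$-regular and $\Delta(G)\le 4$, any such component would already be a component of $G$, forcing $G=C_{2,5}$. Thus the induction hypothesis applies to $G-S$, and using $i(G)=0$,
\[
\nu_s(G)\ \ge\ 1+\tfrac19\bigl(n(G)-|S|-i(G-S)\bigr)\ \ge\ \tfrac{n(G)}9,
\]
the last step being exactly $|S|+i(G-S)\le 9$. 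Since $u\sim v$ gives $|S|=|N[u]|+|N[v]|-|N[u]\cap N[v]|\le 5+5-2=8$, with equality only when $u,v$ both have degree $4$ and no common neighbour, there is always at least one unit of slack for the vertices that $G-S$ may isolate.

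I would organise the search for a good edge by $\delta(G)$. If $\delta(G)\le 3$, take $u$ of minimum degree and $v$ any neighbour; then $|S|\le 4+5-2=7$, leaving slack at least $2$, which should absorb the vertices at distance $2$ from $\{u,v\}$ all of whose neighbours lie in $N(u)\cup N(v)$ — a quantity one bounds by a direct count. Likewise, if $G$ is $4$-regular but some edge lies in a triangle, a common neighbour drops $|S|$ to at most $7$ and again leaves slack. The delicate case is when $G$ is $4$-regular and triangle-free, for then $|S|=8$ on every edge and I must produce an edge for which at most one vertex of $G-S$ becomes isolated.

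The main obstacle is exactly this $4$-regular triangle-free situation, and it is where $C_{2,5}$ must be pinned down. For such an edge $uv$ the six ``outer'' vertices $N(u)\cup N(v)\setminus\{u,v\}$ are distinct, and a vertex $w$ becomes isolated in $G-S$ precisely when all four of its neighbours lie among these six. The crux is to show that if \emph{every} edge of $G$ forces two such vertices, then the resulting adjacency pattern on successive outer shells propagates around $G$ and forces $G\cong C_{2,5}$ — indeed in $C_{2,5}$ each edge leaves exactly the two remaining vertices isolated, giving $|S|+i(G-S)=8+2=10$, which is why the reduction, and only it, breaks there. I expect the work to be a finite but careful analysis of how the distance-$2$ vertices attach to the six-vertex outer shell; the subcubic argument of \cite{JRS14}, where $K_{3,3}$ plays the analogous exceptional role, gives the right template, and this bookkeeping — rather than any single idea — is the part I anticipate being most laborious.
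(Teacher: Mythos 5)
Your skeleton---induct on $n$, reduce to a connected graph with no isolated vertices, and delete $S=N[u]\cup N[v]$ for a well-chosen edge $uv$---is exactly the engine of the paper's argument, and your observations that $|S|+i(G-S)\le 9$ suffices and that $G-S$ cannot acquire a $C_{2,5}$ component are correct. But the proposal has two genuine gaps, and they sit precisely where you declare the work to be routine.

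First, the numerology of the low-degree case does not close. If $\delta(G)=3$, taking $u$ of degree $3$ gives $|S|\le 7$ and five ``outer'' vertices, each sending at most $3$ edges out of $S$; an isolated vertex of $G-S$ needs only $3$ neighbours among these five, so a priori $i(G-S)\le\lfloor 15/3\rfloor=5$ and $|S|+i(G-S)$ can be as large as $12$. The situation is worse for $\delta\in\{1,2\}$: with $u$ an end-vertex whose neighbour $v$ has degree $4$ one gets $|S|=5$ but up to $9$ isolated degree-one vertices, i.e.\ $|S|+i(G-S)=14$. The ``slack of $2$'' absorbs nothing here without first establishing structural restrictions (no two end-vertices sharing a neighbour or at distance $4$, no degree-$2$ vertex on a short cycle, and ultimately girth at least $5$), and proving those restrictions is the bulk of the paper (its Claims 1--8). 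You have the difficulty exactly inverted: the minimum-degree-at-most-$3$ case is not the easy half.

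Second, your inductive step is locked into a single-edge reduction, but in several of these configurations no edge $uv$ with $|N[u]\cup N[v]|+i\le 9$ is available (or at least none is exhibited); the paper instead repeatedly finds an \emph{induced matching of size two}, deletes a set of up to $18$ vertices containing the closed neighbourhoods of both edges and the newly isolated vertices, and gains $2$ in the count. Your framework as stated cannot express this move, so even granting the structural claims you would need to enlarge the induction step. Your identification of the $4$-regular triangle-free (more precisely, girth-$\ge 5$ versus $4$-cycle) dichotomy as the place where $C_{2,5}$ is forced is correct in spirit and matches the paper's final two claims, but that case analysis is left entirely unexecuted, and it too proceeds via a size-two matching rather than a single edge. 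As it stands this is a plausible research plan rather than a proof.
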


 We proceed by contradiction. Suppose that $G$ is a counterexample to Theorem
1' of minimum order, namely, $\Delta(G)\geq 4$,
\begin{equation}
 \label{eq:nuG}
 \nu_s(G)<\frac{1}{9}n(G),
\end{equation}
and
\begin{equation}
\label{eq:nuG'}
 \text{\em for every proper subgraph $G'$ of $G$, $\nu_s(G')\geq
\frac{1}{9}((n(G')-i(G'))$.}
\end{equation}
It is worth remarking that the minimality of $G$ implies that $G$ is
connected and every subgraph of $G$ is not $C_{2,5}$.

The main point of our proof is to show that $G$ satisfies
\begin{equation}
\label{eq:1}
\text{\em{$G$ has girth at least $5$ and $\delta(G)\geq 3$.}}
\end{equation}

First, let us see how (\ref{eq:1}) implies contradiction. Suppose that $G$
satisfies (\ref{eq:1}). Let $uv$ be any edge of $G$. Then $G$ is connected by
the assumption on
minimality. Let $X=N(u)\cup N(v)$. Then $|X|\leq 8$ as the maximum degree of
$G$ is at most $4$. Since the girth
of $G$ is at least $5$, each vertex in $V(G)-X$ is adjacent to at most $2$
vertices in $X$. Combining this with the assumption $\delta(G)\geq 3$ we obtain
that there is no isolated vertex
in $G'=G-X$. Moreover, $G'\neq C_{2,5}$ as remarked above. Therefore,
$\nu_s(G')\geq\frac{1}{9}n(G')$ holds. It is easy to see
that if $M$ is an induced matching of $G'$ then $M\cup \{uv\}$ is an induced
matching of $G$. Therefore,
\[
 \nu_s(G)\geq 1=\nu_s(G')+1\geq \frac{1}{9}n(G')+1\geq
\frac{1}{9}(n(G)-8)>\frac{1}{9}n(G)
\]
holds, contradicting (\ref{eq:nuG}).

Next, we will prove (\ref{eq:1}) through a sequence of claims using local
reduction, similar to \cite{JRS14}. 
We call a vertex of degree $1$ in $G$ an \emph{end-vertex}. For an induced
subgraph $G'$ of $G$ we denote by $I(G')$ the set of isolated vertices in $G'$
and $I_j(G')\subseteq I(G')$ the set of isolated vertices in $G'$ which has
degree $j$ in $G$. The cardinalities of $I(G'),I_j(G')$ are denoted by
$i(G'),i_j(G')$, for $j\in \{1,2,3,4\}$, respectively. Then we have
\begin{equation}
 \label{eq:iG'}
 i(G')=i_1(G')+i_2(G')+i_3(G')+i_4(G').
\end{equation}

For $X\subset V$, if $G'=G-X$ then, since each vertex in $I(G')$ is
adjacent only to vertices in $X$, one can see that
\begin{equation}
\label{eq:dout}
 d^{out}(X)\geq i_1(X)+2i_2(X)+3i_3(X)+4i_4(X).
\end{equation}
We also have
\begin{equation}
 \label{eq:dout2}
 d^{out}(X\cup I(G'))=d^{out}(X)-(i_1(X)+2i_2(X)+3i_3(X)+4i_4(X)).
\end{equation}

\begin{claim}
 \label{clm:1}
 The neighbor of an end-vertex has degree $4$.
\end{claim}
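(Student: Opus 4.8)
The plan is to argue by contradiction using the minimality hypothesis (\ref{eq:nuG'}). Let $v$ be an end-vertex with unique neighbour $u$, and suppose toward a contradiction that $d(u)\le 3$. First I would dispose of the degenerate case $d(u)=1$: then $\{u,v\}$ would form a connected component isomorphic to $K_2$, contradicting that $G$ is connected with $\Delta(G)\ge 4$. So I may assume $d(u)\in\{2,3\}$.

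The core reduction deletes the closed neighbourhood $X=\{u\}\cup N(u)$ and re-inserts the edge $uv$. Since every neighbour of $u$ lies in $X$ and $v$ has no neighbour other than $u$, for any induced matching $M$ of $G':=G-X$ the set $M\cup\{uv\}$ is again an induced matching of $G$; hence $\nu_s(G)\ge\nu_s(G')+1$. Applying (\ref{eq:nuG'}) to the proper subgraph $G'$ together with $n(G')=n(G)-|X|$ gives
\[
\nu_s(G)\ge \tfrac19\bigl(n(G)-|X|-i(G')\bigr)+1,
\]
so it suffices to establish $|X|+i(G')\le 9$, which contradicts (\ref{eq:nuG}). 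As $|X|=d(u)+1\le 4$, the task reduces to bounding the number of vertices that become isolated once $X$ is removed.

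To bound $i(G')$ I would observe that an isolated vertex of $G'$ is adjacent only to vertices of $X$, and in fact only to $N(u)\setminus\{v\}$ (it is not adjacent to $u$, nor to the degree-one vertex $v$). By the observation leading to (\ref{eq:dout}), an isolated vertex of degree $j$ in $G$ consumes $j$ of the edges leaving $N(u)\setminus\{v\}$, whence $i(G')\le\sum_{w\in N(u)\setminus\{v\}}(d(w)-1)\le 3(d(u)-1)$. For $d(u)=2$ this gives $|X|+i(G')\le 3+3=6\le 9$, settling that case, and for $d(u)=3$ it gives $|X|+i(G')\le 4+6=10$, which is conclusive whenever $i(G')\le 5$.

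The remaining obstacle, and the only genuinely tight configuration, is $d(u)=3$ with $i(G')=6$. The degree count then forces a rigid local picture: writing $N(u)=\{v,w_1,w_2\}$, one must have $d(w_1)=d(w_2)=4$, $w_1\not\sim w_2$, and each $w_i$ adjacent to three (pairwise distinct) end-vertices. Here I would switch to a size-two induced matching $\{w_1e_1,\,w_2e_2\}$, where $e_i$ is an end-vertex at $w_i$; this is induced because $w_1\not\sim w_2$ and each $e_i$ has no neighbour besides $w_i$. Deleting $X'=N[w_1]\cup N[w_2]$, with $|X'|=9$ since $w_1,w_2$ share only the common neighbour $u$, isolates only $v$, so $i(G-X')=1$, and (\ref{eq:nuG'}) yields $\nu_s(G)\ge\tfrac19(n(G)-9-1)+2>\tfrac19 n(G)$, the desired contradiction. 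I expect the bookkeeping of this extremal case—verifying that $i(G')=6$ forces exactly this structure and that the alternative two-edge reduction then applies—to be the main point requiring care.
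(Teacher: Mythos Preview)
Your argument is correct and follows essentially the same route as the paper: delete the closed neighbourhood $X$ of the low-degree neighbour, bound $|X|+i(G')\le 9$ via $d^{out}(X)\le 6$, and then deal separately with the unique tight configuration $d(u)=3$, $i(G')=6$. The only cosmetic difference is that in this extremal case the paper simply observes that $G$ is then a specific $10$-vertex graph with $\nu_s(G)=2>\tfrac{10}{9}$, whereas you perform an additional (correct but unnecessary) reduction via $X'=N[w_1]\cup N[w_2]$; since the forced structure already gives $n(G)=10$, exhibiting the induced matching $\{w_1e_1,w_2e_2\}$ alone suffices.
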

\begin{proof}
Let $u$ be an end-vertex and $v$ its unique neighbor. 
Suppose to the contrary that $d(v)\leq 3$. Let $X=\{v\}\cup N(v)$ and $G'=G-X$.
Then a simple counting shows that $|X|\leq 4$ and
$d^{out}(X)\leq 6$, thus $i(G')\leq 6$, by (\ref{eq:iG'}) and (\ref{eq:dout}).

If both
$i(G')=6$ and $|X|=4$ hold, $G$ is the graph in Figure \ref{fig:5} and
it is easy to see that $\nu_s(G)=2>\frac{1}{9}n(G)$, contradicting
(\ref{eq:nuG}). 

If $i(G')< 6$ or
$|X|<4$, then, noting that $n(G')=n(G)-|X|$, we have $\nu_s(G)\geq
1+\nu_s(G')\geq
1+\frac{1}{9}(n(G')-i(G'))\geq 1+\frac{1}{9}(n(G)-9)=\frac{1}{9}n(G)$, again a
contradiction to (\ref{eq:nuG}). 
\end{proof}

\begin{figure}[h!]
\centering
 \includegraphics[width=.2\linewidth]{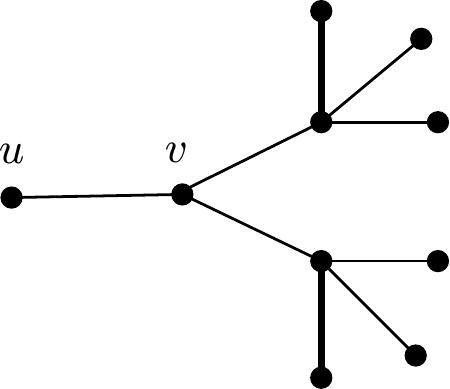}
 \caption{The graph in the proof of Claim \ref{clm:1}; thick edges indicate the
induced matching.}
 \label{fig:5}
\end{figure}

\begin{claim}
 \label{clm:2}
 No two end-vertices have a common neighbor.
\end{claim}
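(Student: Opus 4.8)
The plan is to argue by local reduction, exactly as in the proof of Claim~\ref{clm:1}. Suppose for contradiction that two end-vertices $u_1,u_2$ share a common neighbor $v$; by Claim~\ref{clm:1} we have $d(v)=4$, so $N(v)=\{u_1,u_2,w_1,w_2\}$. Set $X=\{v\}\cup N(v)$ and $G'=G-X$, so $|X|=5$ and $n(G')=n(G)-5$. Since $v,u_1,u_2$ have all their neighbours inside $X$, every vertex of $I(G')$ is adjacent only to $w_1$ or $w_2$; hence $d^{out}(X)\le 6$ (it is $\le 3$ if $v$ has a third end-vertex neighbour, and $G=K_{1,4}$ with $\nu_s(G)=1>\tfrac19n(G)$ if all four neighbours are end-vertices), and so $i(G')\le 6$ by (\ref{eq:dout}). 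Because $N(v)\subseteq X$, the edge $u_1v$ together with any induced matching $M$ of $G'$ is again an induced matching of $G$, so applying (\ref{eq:nuG'}) to the proper subgraph $G'$ (which has no $C_{2,5}$ component) gives
\[
\nu_s(G)\ \ge\ 1+\nu_s(G')\ \ge\ 1+\tfrac19\bigl(n(G)-5-i(G')\bigr),
\]
which already contradicts (\ref{eq:nuG}) whenever $i(G')\le 4$.

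It remains to defeat the tight cases $i(G')\in\{5,6\}$, and this is where the real work lies. I would first extract structure. Writing $p,q$ for the numbers of isolated neighbours private to $w_1,w_2$ and $r$ for the shared ones, the bounds $p+r\le d(w_1)-1\le 3$ and $q+r\le d(w_2)-1\le 3$ together with $p+q+r=i(G')\ge 5$ force $p\ge 2$, $q\ge 2$ and $w_1\not\sim w_2$. Moreover a private isolated neighbour of $w_i$ has all its neighbours in $X$ and is adjacent neither to $v$ nor to the end-vertices, hence is itself a pendant. Thus $w_1$ has two private pendant neighbours $a_1,a_2$ and $w_2$ has two private pendants $b_1,b_2$.

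The plan is then to upgrade to a \emph{two-edge} reduction. The pair $\{w_1a_1,\,w_2b_1\}$ is an induced matching (using $w_1\not\sim w_2$ and the pendant structure) that avoids $v,u_1,u_2$. To splice it onto an induced matching of the remainder I delete, in addition to $X$, the few neighbours $Z_i=N(w_i)\cap V(G'')$ of $w_i$ lying outside $X\cup I(G')$; with $X^+=X\cup Z_1\cup Z_2$ and $t=|Z_1|+|Z_2|$ one checks that every neighbour of $w_1$ or $w_2$ is now either deleted or isolated in $G-X^+$, so $\{w_1a_1,w_2b_1\}$ extends any induced matching of $G-X^+$, giving
\[
\nu_s(G)\ \ge\ 2+\nu_s(G-X^+)\ \ge\ 2+\tfrac19\bigl(n(G)-5-t-i(G-X^+)\bigr).
\]
The decisive estimates are $i(G-X^+)\le i(G')+3t$ (each deleted vertex of $Z_1\cup Z_2$ isolates at most three further vertices) and the counting bound $t+i(G')\le t+i(G')+r\le d^{out}(X)\le 6$, i.e.\ $t\le 6-i(G')$; substituting these collapses the right-hand side to $\tfrac19(n(G)+4)>\tfrac19 n(G)$, the desired contradiction. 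I expect the main obstacle to be exactly this final step: confirming that the two pendant edges can always be legally appended, and that deleting $Z_1\cup Z_2$ does not spawn so many new isolated vertices that the second matching edge is wasted. The inequality $t\le 6-i(G')$, which says the scarce out-degree budget of $\{w_1,w_2\}$ cannot simultaneously feed many isolated vertices and many connections into $G''$, is what keeps the accounting in balance.
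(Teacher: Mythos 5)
Your proposal is correct and follows essentially the same route as the paper: handle $i(G')\le 4$ by a one-edge reduction on $X=\{v\}\cup N(v)$, and in the tight case $i(G')\ge 5$ show that $w_1,w_2$ are non-adjacent and each has a pendant neighbour, then perform a two-edge reduction after deleting (up to the cosmetic choice of keeping $I(G')$ as discounted isolated vertices rather than removing it) the set $X\cup N(w_1)\cup N(w_2)$. Your $p,q,r$ count and the trade-off $t\le 6-i(G')$ are just a more explicit version of the paper's bounds $|X'|\le 11$, $d^{out}(X')\le 3$.
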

\begin{proof}
Suppose to the contrary that two end-vertices $u_1,u_2$ have a common neighbor
$v$.
Then $d(v)=4$ by Claim \ref{clm:1}. Let $X=\{v\}\cup N(v)=\{v,u_1,u_2,w_1,w_2\}$
and $G'=G-X$.
Then $|X|=5$ and $d^{out}(X)\leq 6$. 

If $i(G')\leq 4$ then $\nu_s(G)\geq
1+\nu_s(G')\geq 1+\frac{1}{9}(n(G')-i(G'))\geq \frac{1}{9}n(G)$, a
contradiction. Therefore, let
us suppose that $i(G')\geq 5$. Then both $w_1,w_2$ must be adjacent to some end-
vertices, say $t_1, t_2$, and moreover $w_1$ and $w_2$ are not adjacent,
otherwise $d^{out}(X)\leq 4$, which implies $i(G')\leq 4$, a contradiction.

Let $X'=X\cup N(w_1)\cup N(w_2)$ and $G''=G-X'$. Then
$|X'|\leq 11$ and $d^{out}(X')\leq 3$, which implies $i(G'')\leq 3$ (see Figure
\ref{fig:6}). Since for
every induced matching $M$ of $G''$, $M\cup\{w_1t_1,w_2t_2\}$ is an induced
matching of $G$, we have 
\[
\nu_s(G)\geq 2+\nu_s(G'')\geq
2+\frac{1}{9}(n(G'')-i(G''))\geq 2+\frac{1}{9}(n(G)-11-3)>\frac{1}{9}n(G),
\]
a contradiction.
\end{proof}

\begin{figure}[h!]
\centering
 \includegraphics[width=.3\linewidth]{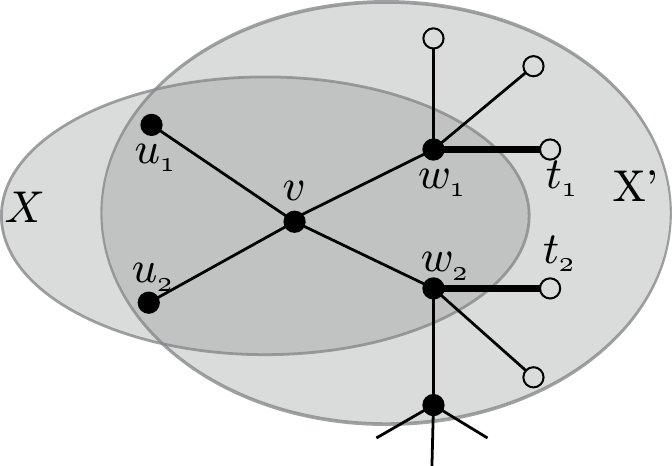}
  \caption{An illustration for the proof of Claim \ref{clm:2}}
  \label{fig:6}
\end{figure}

\begin{claim}
 \label{clm:3}
 No two end-vertices have distance $4$ in $G$.
\end{claim}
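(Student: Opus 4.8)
The plan is to follow the same local-reduction strategy used in Claims \ref{clm:1} and \ref{clm:2}: assume for contradiction that two end-vertices $u_1,u_2$ lie at distance $4$, take a small vertex set $X$ containing their neighborhoods, delete it to form $G'=G-X$, and derive a contradiction to (\ref{eq:nuG}) by exhibiting a large induced matching in $G$ built from a good induced matching of $G'$ plus a couple of pendant edges. Let $v_1$ be the neighbor of $u_1$ and $v_2$ the neighbor of $u_2$; by Claim \ref{clm:1} both $v_1$ and $v_2$ have degree $4$, and by Claim \ref{clm:2} they are distinct and have no common end-vertex neighbor. Since $u_1,u_2$ are at distance $4$, there is an internal vertex $w$ (or at least a short path) joining $v_1$ to $v_2$; I would set $X = \{u_1,u_2,v_1,v_2\}\cup N(v_1)\cup N(v_2)$ and $G'=G-X$.

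First I would bound $|X|$ and $d^{out}(X)$. Since $d(v_1)=d(v_2)=4$ and $u_1,u_2$ are already pendant, each $v_i$ contributes at most three further vertices, so $|X|\le 4 + 3 + 3 = 10$ (smaller if $v_1,v_2$ share an intermediate neighbor, which the distance-$4$ hypothesis forces to some degree). The outgoing degree is controlled because the four edges at $v_1$ and $v_2$ are mostly internal to $X$: the edges $v_1u_1$, $v_2u_2$ are absorbed, and I would count that $d^{out}(X)$ is at most the number of edges leaving the second neighborhood, which should come out to $d^{out}(X)\le 6$ or so after accounting for the path structure linking the two ends. Then (\ref{eq:iG'}) and (\ref{eq:dout}) bound $i(G')$. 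Because the two pendant edges $u_1v_1$ and $u_2v_2$ are at distance $4$, they form an induced matching of size $2$ in $G$ that is compatible with any induced matching of $G'$, so $M\cup\{u_1v_1,u_2v_2\}$ is induced for every induced matching $M$ of $G'$, giving
\[
 \nu_s(G)\ge 2+\nu_s(G')\ge 2+\tfrac{1}{9}(n(G')-i(G'))\ge 2+\tfrac{1}{9}(n(G)-|X|-i(G')).
\]
The goal is to show $|X|+i(G')\le 17$, which would yield $\nu_s(G)>\frac{1}{9}n(G)$ and contradict (\ref{eq:nuG}).

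The main obstacle, as in the previous claims, will be the tight boundary cases where $|X|+i(G')$ is as large as the bound allows, forcing either an exact extremal configuration or a second-stage reduction. When $i(G')$ is too large relative to $|X|$, the deleted neighbors must themselves be attached to many new end-vertices, and I would then enlarge $X$ to $X'$ by absorbing those end-vertices and their neighbors (exactly the two-step trick of Claim \ref{clm:2}), finding an even larger induced matching of size $3$ or $4$. I expect one or two small exceptional graphs to arise in the fully saturated case, each of which must be checked directly to have $\nu_s>\frac{1}{9}n$, analogous to the graph of Figure \ref{fig:5}. Keeping track of how the distance-$4$ condition constrains the adjacencies between $N(v_1)$ and $N(v_2)$ — in particular ruling out short cycles that would shrink $d^{out}(X)$ in a way that over-counts isolated vertices — will be the delicate bookkeeping step, but the structure of the argument mirrors the preceding claims closely enough that no genuinely new idea should be required.
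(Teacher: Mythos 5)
Your overall strategy (delete $X=\{v_1,v_2\}\cup N(v_1)\cup N(v_2)$, add the two pendant edges $u_1v_1,u_2v_2$, and aim for $|X|+i(G')\le 18$) is the paper's, but the arithmetic at the heart of your sketch does not close, and the one idea that makes it close is missing. First, your estimate $d^{out}(X)\le 6$ is far off: with $d(v_1)=d(v_2)=4$ and the common middle vertex $w$ of the path $u_1v_1wv_2u_2$, one gets $|X|\le 9$, but $w$ can send $2$ edges out of $X$ and each of the four remaining neighbours of $v_1,v_2$ can send $3$, so the correct bound is $d^{out}(X)\le 14$. Since an isolated vertex of $G'$ of degree $1$ in $G$ absorbs only \emph{one} unit of $d^{out}(X)$, inequality (\ref{eq:dout}) alone gives only $i(G')\le 14$, hence $|X|+i(G')\le 23$ --- not enough. (Also note your target of $17$ is stronger than needed: $|X|+i(G')\le 18$ already yields $\nu_s(G)\ge\frac19 n(G)$, contradicting (\ref{eq:nuG}).)

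The missing step is the following use of Claim \ref{clm:2}. From (\ref{eq:iG'}) and (\ref{eq:dout}) one has $d^{out}(X)\ge 2i(G')-i_1(G')$, so if $i(G')\ge 10$ then $i_1(G')\ge 2i(G')-14\ge 6$; but these six end-vertices of $I_1(G')$ are attached only to the at most five vertices of $X\setminus\{v_1,v_2,u_1,u_2\}$, so two of them share a neighbour, contradicting Claim \ref{clm:2}. Hence $i(G')\le 9$, and $|X|+i(G')\le 9+9=18$ finishes the proof in one stage. Your proposed fallback --- a second-stage enlargement of $X$ absorbing new end-vertices, as in Claim \ref{clm:2} --- is left entirely unspecified (which vertices to absorb, what the new $|X'|$, $d^{out}(X')$ and matching size would be), and it is not needed once the pigeonhole argument above is in place; as written, the proposal therefore has a genuine gap rather than an alternative proof.
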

\begin{proof}

Suppose to the contrary that two end-vertices $u_1$ and $u_2$ have distance $4$
in $G$ and $u_1v_1wv_2u_2$ is a
path of length $4$ linking $u_1$ and $u_2$. 
Consider $X=\{v_1,v_2\}\cup N(v_1)\cup N(v_2)$. Then $|X|\leq 9$,
and $d^{out}(X)\leq 14$.
From (\ref{eq:iG'}) and (\ref{eq:dout}) we have $d^{out}(X)\geq
2i(G')-i_1(G')$. Thus, if $i(G')\geq 10$ then $i_1(G')\geq
2i(G')-d^{out}(X)\geq 6$. However, vertices in $I_1(G')$ are only
adjacent to vertices in set $X-\{v_1,v_2,u_1,v_1\}$, which consists of at most
$5$ vertices. Hence, there are two end-vertices in $I_1(G')$ that have a common
neighbor, contradicting Claim \ref{clm:2}. Therefore, $i(G')\leq 9$, and, noting
that each induced matching $M$ of $G'$ can be extended to an induced matching
$M\cup \{u_1v_1,u_2v_2\}$ of $G$, we derive that
\[
\nu_s(G)\geq 2+\nu_s(G')\geq 2+\frac{1}{9}(n(G')-i(G'))\geq 2+
\frac{1}{9}(n(G)-9-9)=\frac{1}{9}n(G),
\]
a contradiction to (\ref{eq:nuG}).
\noindent(See Figure \ref{fig:7}.)
\end{proof}

\begin{figure}[!h]
\centering
 \includegraphics[width=.4\linewidth]{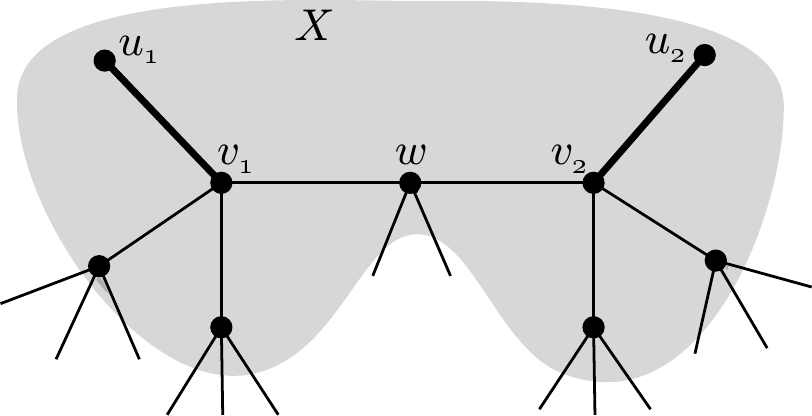}
 \caption{An illustration for the proof of Claim \ref{clm:3}}
 \label{fig:7}
\end{figure}

\begin{claim}
 \label{clm:4}
 $\delta(G)\geq 2$.
\end{claim}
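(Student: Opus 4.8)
The plan is to argue by contradiction in the same style as the earlier claims: suppose $G$ has an end-vertex $u$, let $v$ be its unique neighbour, so that $d(v)=4$ by Claim~\ref{clm:1}, and write $N(v)=\{u,w_1,w_2,w_3\}$. I would delete $X=\{v\}\cup N(v)$, of size $5$, and set $G'=G-X$. Since $u$ is an end-vertex and all neighbours of $v$ already lie in $X$, every isolated vertex of $G'$ has all its neighbours among $\{w_1,w_2,w_3\}$; hence $i_4(G')=0$, and by Claim~\ref{clm:2} each $w_i$ is the neighbour of at most one end-vertex, so $i_1(G')\le 3$. As $uv$ extends any induced matching of $G'$, the bound $\nu_s(G)\ge 1+\nu_s(G')\ge 1+\tfrac19(n(G')-i(G'))$ together with $n(G')=n(G)-5$ already contradicts (\ref{eq:nuG}) whenever $i(G')\le 4$. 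So the real work is the case $i(G')\ge 5$.

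For that case I would first pin down the adjacencies inside $\{w_1,w_2,w_3\}$. Writing $e$ for the number of edges there and using $\sum_i d(w_i)\le 12$, one gets $d^{out}(X)\le 9-2e$, while (\ref{eq:dout}) counts the edges from $X$ to the isolated vertices as $i_1(G')+2i_2(G')+3i_3(G')$. Using $i(G')\ge5$, $i_1(G')\le3$ and $i_4(G')=0$, this edge-count is at least $7$, which forces $e\le 1$. If $e=1$, the same inequalities force $i_1(G')=3$, i.e.\ all three $w_i$ carry an end-vertex; but then the two $w_i$ that are non-adjacent to each other carry end-vertices at distance exactly $4$ through $v$, contradicting Claim~\ref{clm:3}. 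Hence $e=0$.

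With $w_1,w_2,w_3$ pairwise non-adjacent, Claim~\ref{clm:3} forbids two end-vertices hanging off two different $w_i$ (they would be at distance $4$ through $v$), so now $i_1(G')\le 1$. Feeding this back into the counting inequalities collapses the configuration completely: one obtains $i(G')=5$ with exactly one end-vertex and four isolated vertices of degree $2$, every $w_i$ of degree $4$, and $d^{out}(X)=9$ realised entirely by edges into these five isolated vertices. Because every edge leaving $X$ then lands on an isolated vertex of $G'$, connectedness of $G$ forces $V(G)-X$ to consist of precisely these five vertices, so $G$ is a single explicit graph on $10$ vertices, with the four degree-$2$ vertices joining the three hubs $w_1,w_2,w_3$ in the (forced) pattern $1+1+2$.

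It then remains only to exhibit an induced matching of size $2$ in this $10$-vertex graph. After relabelling we may take $p_1\sim w_1,w_2$ and $p_2\sim w_1,w_3$; then $\{w_2p_1,\,w_3p_2\}$ is an induced matching, since $w_2\not\sim w_3$, $p_1\not\sim w_3$, $p_2\not\sim w_2$ and $p_1\not\sim p_2$. Consequently $\nu_s(G)\ge 2>\tfrac{10}{9}=\tfrac19 n(G)$, contradicting (\ref{eq:nuG}) and completing the proof. I expect the main obstacle to be the bookkeeping in the middle two steps: squeezing the inequalities so that the $i(G')\ge5$ case collapses to the single rigid gadget (in particular ruling out $e=1$ via Claim~\ref{clm:3}), after which the size-$2$ matching in that gadget is a routine check.
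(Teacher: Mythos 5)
Your proof is correct and follows essentially the same route as the paper: delete $X=\{v\}\cup N(v)$, dispose of $i(G')\le 4$ by the standard counting, and for $i(G')\ge 5$ use Claims \ref{clm:2} and \ref{clm:3} together with the degree counts to collapse everything to the same rigid $10$-vertex configuration, where a size-$2$ induced matching yields the contradiction. The only cosmetic differences are that you organize the bookkeeping around the number of edges inside $\{w_1,w_2,w_3\}$ rather than around $i_1(G')$, and you exhibit the final matching using two degree-$2$ vertices instead of the paper's $\{w_1s,w_2t\}$.
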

\begin{proof}
Suppose to the contrary that $u$ is a vertex of degree $1$ and $v$ is its unique
neighbor in $G$. Then, by Claim \ref{clm:1}, $v$ has $3$ other neighbors
$w_1,w_2,w_3$. Let $X=\{u,v,w_1,w_2,w_3\}$ and
$G'=G-X$. Since each induced matching $M$ of $G'$ can be extended to an induced
matching $M\cup \{uv\}$ of $G$, if $i(G')\leq 4$ then 
\[
 \nu_s(G)\geq 1+\nu_s(G')\geq 1 + \frac{1}{9}(n(G')-i(G'))\geq
1+\frac{1}{9}(n(G)-5-4)=\frac{1}{9}n(G)
\]
holds, contradicting (\ref{eq:nuG}). Therefore, let us suppose that
$i(G')\geq 5$.
Since
\[
 2i(G')-i_1(G')\leq
d^{out}(X)\leq 9,
\]
 we must have $i_1(G')\geq 1$. By Claim \ref{clm:2} and \ref{clm:3}, if $s$ and
$t$ are
two vertices in $I_1(G')$ then they are adjacent to two distinct vertices $w_i$
and $w_j$ and furthermore, $w_i$ and $w_j$ are adjacent. Hence,
\[
 d^{out}(X)\leq 9-2(i_1(G')-1)=11-2i_1(G').
\]

Therefore, by (\ref{eq:dout}), $11\geq 3i_1(G')+2i_2(G')+3i_3(g')+4i_4(G')$
holds, which, together
with $i_1(G')\geq 1$ and $i(G')\geq 5$, implies that $i_1(G')=1,
i_2(G')=4$, $i_3(G')=i_4(G')=0$ and furthermore $w_1,w_2,w_3$ all have degree
$4$ and are not adjacent. But then $d^{out}(X\cup I(G'))=0$ and hence $n(G)=10$.
Let $s$ be
the unique vertex in $I_1(G')$ and suppose without loss of generality that its
unique neighbor in $G$ is $w_1$. Since $d(w_1)=4$ and $i_2(G')=4$, there is a
vertex $t$ in $I_2(G')$ that is not adjacent to $w_1$. Then, $\{w_1s, w_2t\}$
is an induced matching of $G$, and thus $\nu_s(G)\geq 2> \frac{1}{9}n(G)$, a
contradiction. 
\end{proof}

\begin{claim}
 \label{clm:5}
 No two vertices of degree $2$ are adjacent.
\end{claim}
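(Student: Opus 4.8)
The plan is to reuse the local-reduction scheme of Claims~\ref{clm:1}--\ref{clm:4}. Suppose, for contradiction, that two adjacent vertices $u,v$ both have degree $2$, and write the other neighbours as $a\in N(u)$ and $b\in N(v)$ (allowing $a=b$). Put $X=\{u,v\}\cup N(u)\cup N(v)$ and $G'=G-X$. Since $X$ contains every neighbour of both endpoints of the edge $uv$, appending $uv$ to any induced matching of $G'$ produces an induced matching of $G$, so $\nu_s(G)\ge 1+\nu_s(G')$. Combining this with the minimality bound (\ref{eq:nuG'}) and $n(G')=n(G)-|X|$ gives $\nu_s(G)\ge 1+\tfrac19\bigl(n(G)-|X|-i(G')\bigr)$, which already contradicts (\ref{eq:nuG}) as soon as $|X|+i(G')\le 9$. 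Thus the entire task reduces to proving $|X|+i(G')\le 9$.

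First I would bound the cut. Plainly $|X|\le 4$. Writing $d^{out}(X)=\sum_{x\in X}d(x)-2\,e(X)$, where $e(X)$ is the number of edges inside $X$, and using $d(u)=d(v)=2$, $d(a),d(b)\le 4$ together with the three forced internal edges $uv,ua,vb$, I obtain $\sum_{x\in X}d(x)\le 12$ and $e(X)\ge 3$, hence $d^{out}(X)\le 6$; the degenerate case $a=b$ only shrinks these quantities.

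The decisive step is to convert this into a bound on $i(G')$, and here I would invoke Claim~\ref{clm:4}. Because $\delta(G)\ge 2$, every vertex isolated in $G'$ still has at least two neighbours in $G$, all of them in $X$, so it accounts for at least two edges of the cut; as distinct isolated vertices give disjoint edges, (\ref{eq:dout}) yields $2\,i(G')\le d^{out}(X)\le 6$, i.e. $i(G')\le 3$. Therefore $|X|+i(G')\le 7\le 9$, and the reduction gives $\nu_s(G)\ge\tfrac19 n(G)$, contradicting (\ref{eq:nuG}).

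I do not expect a genuine obstacle in this claim: once $\delta(G)\ge 2$ is available, halving the isolated-vertex count leaves a comfortable margin, so, in contrast to Claim~\ref{clm:2}, there is no tight boundary case (and in particular no $C_{2,5}$ subtlety) to dispose of. The only points requiring care are the degree-sum estimate for $d^{out}(X)$ and the correct use of $\delta(G)\ge 2$. Should the margin ever turn out to be tight, the natural fallback would be to enlarge $X$ to $X\cup N(a)\cup N(b)$ and rerun the same count, exactly as in the proof of Claim~\ref{clm:2}.
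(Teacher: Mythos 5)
Your proposal is correct and follows essentially the same route as the paper: the paper also takes $X=N(u)\cup N(v)$ (the same four-vertex set), bounds $d^{out}(X)\le 6$, uses Claim~\ref{clm:4} via (\ref{eq:dout}) to get $i(G')\le\lfloor d^{out}(X)/2\rfloor\le 3$, and extends an induced matching of $G'$ by $uv$ to contradict (\ref{eq:nuG}). Your extra care with the degree-sum computation and the degenerate case $a=b$ is fine but not needed beyond what the paper states.
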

\begin{proof}
Suppose to the contrary that $u,v$ are two adjacent vertices of degree $2$.
Let $X=N(u)\cup N(v)$ and $G'=G-X$. Then $|X|= 4$ and $d^{out}(X)\leq 6$.
Remembering that
$i_1(G)=0$ by Claim \ref{clm:4}, we see that $i(G')\leq
\lfloor\frac{1}{2}d^{out}(X)\rfloor\leq 3$ by (\ref{eq:dout}). Since for each
induced matching $M$ of $G'$, $M\cup\{uv\}$ is an induced matching of $G$, it
follows that 
\[
 \nu_s(G)\geq
1+\nu_s(G')\geq 1+ \frac{1}{9}(n(G')-i(G'))\geq 1+\frac{1}{9}(n(G)-4-3)\geq
\frac{1}{9}n(G)
\]
holds, a contradiction.
\end{proof}

\begin{claim}
 \label{clm:6}
 No vertex of degree $2$ is contained in a triangle.
\end{claim}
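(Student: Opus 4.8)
The plan is to reuse the local-reduction template of Claims \ref{clm:4} and \ref{clm:5}: suppose the statement fails, delete a small set $X$ around the bad configuration, and extend a near-optimal induced matching of $G'=G-X$ by one edge to contradict (\ref{eq:nuG}). So suppose $u$ has degree $2$ and lies in a triangle $uvw$; thus $N(u)=\{v,w\}$ and $vw\in E$. By Claim \ref{clm:4} we have $\delta(G)\geq 2$, and by Claim \ref{clm:5} neither neighbor of $u$ can have degree $2$ (each is adjacent to the degree-$2$ vertex $u$), so $d(v),d(w)\in\{3,4\}$. I would take the matching edge to be $uv$ and set $X=N(u)\cup N(v)$ and $G'=G-X$; since all neighbors of $u$ and of $v$ lie in $X$, any induced matching $M$ of $G'$ yields the induced matching $M\cup\{uv\}$ of $G$.

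The crux is a tight enough bound on $d^{out}(X)$, and this is precisely where being in a triangle matters. Write $d=d(v)$, so $X=\{u,v,w\}\cup\bigl(N(v)\setminus\{u,w\}\bigr)$ and $|X|=d+1\leq 5$. All three triangle edges $uv$, $uw$, $vw$ have both endpoints in $X$, as do the $d-2$ edges from $v$ to its remaining neighbors; hence $X$ spans at least $d+1$ internal edges. Bounding every degree in $X$ by $4$ (with $d(u)=2$) and subtracting twice the internal edges gives
\[
d^{out}(X)\leq\bigl(2+d+4+4(d-2)\bigr)-2(d+1)=3d-4\leq 8.
\]
Since $i_1(G')=0$ by Claim \ref{clm:4}, (\ref{eq:dout}) yields $2i(G')\leq d^{out}(X)$, so $i(G')\leq 4$.

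Applying (\ref{eq:nuG'}) to the proper subgraph $G'$ then gives
\[
\nu_s(G)\geq 1+\nu_s(G')\geq 1+\tfrac{1}{9}\bigl(n(G')-i(G')\bigr)\geq 1+\tfrac{1}{9}\bigl(n(G)-5-4\bigr)=\tfrac{1}{9}n(G),
\]
contradicting (\ref{eq:nuG}); here the bounds $|X|\leq 5$ and $i(G')\leq 4$ are used, and the inequality is tight exactly when $d=4$, still contradicting the strict inequality (\ref{eq:nuG}). The step I expect to demand the most care is the internal-edge count: the triangle edge $vw$ is exactly the extra edge that pulls $d^{out}(X)$ down to $8$ — without it one only gets $d^{out}(X)\leq 3d-2\leq 10$ and hence $i(G')\leq 5$, which is too weak to force the contradiction. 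One should also note that any coincidences (for instance a common neighbor of $v$ and $w$, or an edge from $w$ into $N(v)$) only add internal edges and so can only decrease $d^{out}(X)$, leaving the bound valid in every configuration.
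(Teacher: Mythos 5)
Your proposal is correct and follows essentially the same route as the paper: the same choice $X=N(u)\cup N(v)$ with $|X|\leq 5$, the same bound $d^{out}(X)\leq 8$ exploiting the triangle edge $vw$, the use of Claim \ref{clm:4} to conclude $i(G')\leq d^{out}(X)/2\leq 4$, and the same final inequality contradicting (\ref{eq:nuG}). Your explicit internal-edge count just makes precise the bound the paper states without detail.
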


\begin{proof}
Suppose to the contrary that $u$ is a vertex of degree $2$ which is contained in
a triangle and
$v$ is one of its neighbors. Let $X=N(u)\cup N(v)$ and $G'=G-X$. Then $|X|\leq
5$ and $d^{out}(X)\leq 8$. On the other hand, by Claim \ref{clm:4} and
(\ref{eq:dout}),
$d^{out}(X)\geq 2i_2(G')+3i_3(G')+4i_4(G')$ and $i(G')=i_2(G')+i_3(G')+i_4(G')$
hold. 
Hence $i(G')\leq d^{out}(X)/2\leq 4$. Therefore,
\[
 \nu_s(G)\geq 1+\nu_s(G')\geq 1+\frac{1}{9}(n(G)-5-4)=\frac{1}{9}n(G),
\]
a contradiction.
\end{proof}

\begin{claim}
 \label{clm:7}
 No vertex of degree $2$ is contained in a cycle of length $4$.
\end{claim}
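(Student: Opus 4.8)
The plan is to mimic the local-reduction scheme of the previous claims: assume a degree-$2$ vertex $u$ lies on a $4$-cycle, reserve one edge (and, in a harder case, a second edge) for the induced matching, delete the closed neighborhood of a suitable vertex set $X$, and contradict (\ref{eq:nuG}) by bounding $|X|+i(G')$ for $G'=G-X$. Throughout I will use $\delta(G)\ge 2$ (Claim \ref{clm:4}), so that $i_1(G')=0$ and hence (\ref{eq:dout}) gives $i(G')\le \lfloor d^{out}(X)/2\rfloor$; the target in each reduction is $|X|+i(G')\le 9$, which yields $\nu_s(G)\ge 1+\frac19(n(G)-|X|-i(G'))\ge \frac19 n(G)$.

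First I fix notation. Suppose for contradiction that a degree-$2$ vertex $u$ lies on a $4$-cycle $u\,a\,b\,c\,u$, so that $N(u)=\{a,c\}$. Since $d(u)=2$, Claim \ref{clm:6} forces $a\not\sim c$ (otherwise $uac$ would be a triangle through the degree-$2$ vertex $u$), and Claim \ref{clm:5} forces $d(a),d(c)\ge 3$. I would reserve the edge $ua$ and take $X=N(u)\cup N(a)$, so that $\{u,a,b,c\}\subseteq X$ and every induced matching of $G'=G-X$ extends by $ua$. If $\min\{d(a),d(c)\}=3$, say $d(a)=3$ with $N(a)=\{u,b,x\}$ (here $x\ne c$ because $a\not\sim c$), then $X=\{u,a,b,c,x\}$ has $|X|=5$, and counting the internal edges $ua,uc,ab,bc,ax$ gives $d^{out}(X)\le 17-2\cdot 5=7$, whence $i(G')\le 3$ and $|X|+i(G')\le 8$. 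This is exactly the arithmetic of Claim \ref{clm:6} and produces the contradiction. By the symmetry between $a$ and $c$, it remains to treat $d(a)=d(c)=4$.

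In the hard case $d(a)=d(c)=4$ we have $N(a)=\{u,b,x_1,x_2\}$ and $|X|=6$, so the crude bound only gives $i(G')\le 5$ and $|X|+i(G')\le 11$, which is insufficient; eliminating $i(G')\in\{4,5\}$ is the main obstacle. The structural leverage is that a vertex $w\in I(G')$ is adjacent to no vertex of $\{u,a\}$, since $N(u)\cup N(a)\subseteq X$, so its at least two neighbors in $X$ all lie among the four ``ports'' $\{b,c,x_1,x_2\}$, whose out-degrees are at most $2,2,3,3$ respectively. In particular a degree-$2$ vertex of $I(G')$ adjacent to both $b$ and $c$ would lie on a triangle with the edge $bc$, contradicting Claim \ref{clm:6}.

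The plan for the hard case is to push this observation: if $i(G')\ge 4$, then at least $8$ of the at most $10$ out-edges must land on isolated vertices, which forces $b,c,x_1,x_2$ to have degree $4$ and each isolated vertex to have degree $2$ attached to a prescribed pair of ports. I would then rule out the few surviving attachment patterns using Claim \ref{clm:5}, the triangle obstruction above, and—when two isolated degree-$2$ vertices would share a common pair of ports—a secondary reduction that reserves a second independent edge and deletes an enlarged vertex set, in the spirit of Claim \ref{clm:2}, to gain the extra ``$+1$''. I expect the delicate part to be precisely this final step: verifying that every admissible attachment pattern of the isolated vertices onto $\{b,c,x_1,x_2\}$ is either outright forbidden by Claims \ref{clm:5} and \ref{clm:6} or admits such a $+2$ reduction.
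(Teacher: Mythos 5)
Your setup and your easy case match the paper's proof: the paper also takes $X=N(u)\cup N(v)$ for a cycle-neighbour $v$ of $u$, gets $|X|\le 6$ and $d^{out}(X)\le 10$, hence $i(G')\le 5$, and disposes of $i(G')\le 3$ by the one-edge extension. The genuine gap is the hard case $i(G')\ge 4$: you only describe a plan (``rule out the few surviving attachment patterns \dots or admits such a $+2$ reduction'') and explicitly defer its verification, so the claim is not actually proved. Moreover, the structural statement you want to launch that case analysis from is too strong: when $i(G')=4$ you only need $8$ of the at most $10$ out-edges of $X$ to land on $I(G')$, so it is \emph{not} forced that all four ports have degree $4$, nor that every isolated vertex has degree exactly $2$; the list of attachment patterns you would have to enumerate is larger than you suggest, and nothing in the proposal shows that each of them is either forbidden or admits a $+2$ reduction.

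The paper closes this case without any enumeration, by a single uniform $+2$ reduction that works for every attachment pattern at once. Since $i(G')\ge 4$ and the other cycle-neighbour $t$ of $u$ (your $c$) already has two neighbours inside $X$, some $s\in I(G')$ is not adjacent to $t$; as $s$ is adjacent to none of $u,v,t$ and $d(s)\ge 2$, it has a neighbour $r\in N(v)\setminus\{u,w\}$, and $\{ut,rs\}$ is reserved as the induced matching. The quantitative point, which your proposal lacks, comes from (\ref{eq:dout2}) and Claim \ref{clm:4}: $d^{out}(X\cup I(G'))\le d^{out}(X)-2i(G')\le 10-8=2$, so after absorbing $I(G')$ the whole configuration has at most $2$ outside neighbours. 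Taking $X'=X\cup I(G')\cup N(X\cup I(G'))$ gives $|X'|\le 6+5+2=13$ and $d^{out}(X')\le 3\cdot 2=6$, hence $i(G-X')\le 3$ and $\nu_s(G)\ge 2+\frac{1}{9}(n(G)-13-3)>\frac{1}{9}n(G)$. This is exactly the ``secondary reduction reserving a second independent edge'' you gesture at; making it uniform via the bound on $d^{out}(X\cup I(G'))$ is the missing idea.
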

\begin{proof}
Suppose that $u$ is a vertex of degree $2$ which is contained in a cycle
$uvwt$ of
length $4$. Let $X=N(u)\cup N(v)$ and $G'=G-X$. Then $|X|\leq 6$ and
$d^{out}(X)\leq 10$, thus $i(G')\leq d^{out}(X)/2 \leq
5$.

If $i(G')\leq 3$ then $\nu_s(G)\geq 1+\nu_s(G')\geq \frac{1}{9}n(G)$,
contradicting (\ref{eq:nuG}). Hence we may suppose that $i(G')\geq 4$. Then
there exists a vertex $s$ in $I(G')$
that is not adjacent to $t$. Since $\delta(s)\geq 2$ by Claim \ref{clm:4}, $s$
is adjacent to a vertex $r$ in $N(v)-\{u,w\}$. Also we have
$|N(X\cup I(G'))|\leq d^{out}(X\cup I(G'))\leq d^{out}(X)-2i(G')\leq 2$, where
the second inequality follows from (\ref{eq:iG'}), (\ref{eq:dout}),
(\ref{eq:dout2}) and Claim \ref{clm:4}.

Now let $X'=X\cup I(G')\cup N(X\cup I(G'))$ and $G''=G-X'$. Then $|X'|\leq
|X|+i(G')+|N(X\cup I(G'))|\leq 6+5+2 = 13$.
Moreover, $d^{out}(X')\leq 3|N(X\cup I(G'))|\leq 6$. Hence $i(G'')\leq
d^{out}(X')/2\leq 3$. Since for each induced matching $M$ of $G''$,
$M\cup\{ut,rs\}$ is an induced matching of $G''$, it follows that
\[
 \nu_s(G)\geq 2+\nu_s(G'')\geq \frac{1}{9}n(G),
\]
contradicting (\ref{eq:nuG}).
(see Figure \ref{fig:9})
\end{proof}
\begin{figure}
\centering
 \includegraphics[width=.35\linewidth]{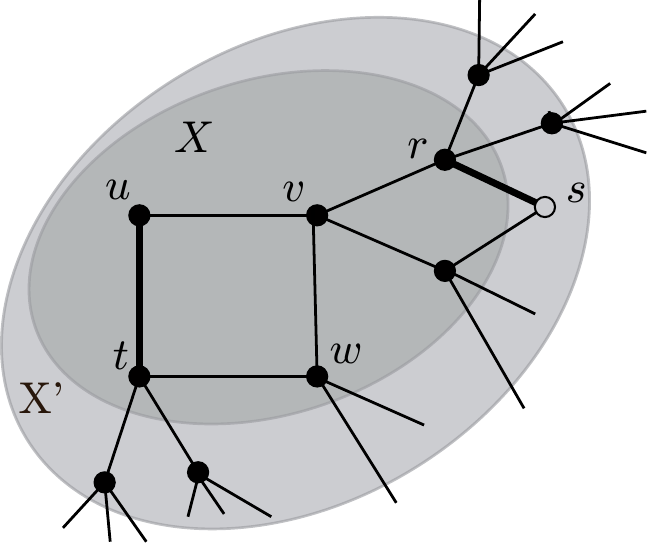}
 \caption{An illustration for the proof of Claim \ref{clm:7}.}
 \label{fig:9}
\end{figure}

\begin{claim}
 \label{clm:8}
 $\delta(G)\geq 3$.
\end{claim}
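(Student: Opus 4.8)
The plan is to rule out vertices of degree exactly $2$, since $\delta(G)\ge 2$ is already guaranteed by Claim~\ref{clm:4}. So suppose $u$ is a vertex of degree $2$ with neighbours $v_1,v_2$. Claim~\ref{clm:5} forces $d(v_1),d(v_2)\ge 3$, Claim~\ref{clm:6} forces $v_1\not\sim v_2$, and Claim~\ref{clm:7} forces $N(v_1)\cap N(v_2)=\{u\}$. Writing $A=N(v_1)\setminus\{u\}$ and $B=N(v_2)\setminus\{u\}$, these facts say that $A$ and $B$ are disjoint sets of size at least $2$, disjoint from $\{u,v_1,v_2\}$, with no edge from $v_1$ to $B$ and no edge from $v_2$ to $A$.

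First I would try to extract two independent edges, one on each side of $u$. Concretely, I would look for $a\in A$ and $b\in B$ with $ab\notin E$; then $\{v_1a,\,v_2b\}$ is an induced matching of $G$, because the only possible chords among $\{v_1,a,v_2,b\}$ are excluded by the structure above together with $ab\notin E$ (note that $u$, although adjacent to both $v_1$ and $v_2$, is not one of the four endpoints and so plays no role in the induced-matching condition). I would then set
\[
X=\{v_1,v_2\}\cup N(v_1)\cup N(v_2)\cup N(a)\cup N(b),\qquad G''=G-X,
\]
so that every vertex of $G''$ is non-adjacent to all of $v_1,a,v_2,b$ and hence any induced matching $M$ of $G''$ yields the induced matching $M\cup\{v_1a,v_2b\}$ of $G$.

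The counting then runs as in the earlier claims. Since $N(v_1)\cap N(v_2)=\{u\}$ we have $|\{v_1,v_2\}\cup N(v_1)\cup N(v_2)|=d(v_1)+d(v_2)+1\le 9$, and adjoining $N(a)$ and $N(b)$ contributes at most $3+3$ further vertices, so $|X|\le 15$. Using Claim~\ref{clm:4} (so that every isolated vertex of $G''$ has at least two neighbours in $X$) together with (\ref{eq:iG'})--(\ref{eq:dout}), I would bound $i(G'')$; the target is $|X|+i(G'')\le 18$, which gives
\[
\nu_s(G)\ge 2+\nu_s(G'')\ge 2+\tfrac19\bigl(n(G)-|X|-i(G'')\bigr)\ge \tfrac19 n(G),
\]
contradicting (\ref{eq:nuG}). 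I expect the arithmetic to be tightest when $d(v_1)=d(v_2)=4$, and, when the first estimate is not strong enough, to follow the pattern of Claims~\ref{clm:2} and~\ref{clm:7} of enlarging the removed set by the isolated vertices of $G''$ and their neighbours before re-deriving the inequality.

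The main obstacle is twofold. First, I must guarantee that a non-adjacent pair $a\in A$, $b\in B$ actually exists; the dangerous case is when $A$ and $B$ are small and completely joined (for instance a $K_{2,2}$ between them), where no such pair is available. In those dense exceptional configurations the relevant part of $G$ is forced to be small and highly structured, so I would instead either exhibit an induced matching of size $2$ directly or choose a different pair of matched edges (e.g.\ replacing $v_2b$ by an edge one step further out), exactly as the earlier claims dispatch their sporadic graphs in Figures~\ref{fig:5}--\ref{fig:9}. Second, the bound on $i(G'')$ must be controlled carefully across the degree cases; here Claims~\ref{clm:5}--\ref{clm:7} are invoked again to restrict how the isolated vertices of $G''$ attach to $X$, ruling out two of them sharing a neighbour or closing a forbidden short cycle, which is what keeps $i(G'')$ within budget.
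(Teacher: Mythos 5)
Your reduction to the existence of a degree-$2$ vertex and your use of Claims~\ref{clm:5}--\ref{clm:7} to pin down the local structure around $u$ are fine, and the observation that $\{v_1a,v_2b\}$ is induced whenever $ab\notin E$ is correct. The proof, however, does not close, and the problem is quantitative, not cosmetic. Your removed set $X=\{v_1,v_2\}\cup N(v_1)\cup N(v_2)\cup N(a)\cup N(b)$ has up to $15$ vertices, of which as many as $10$ (the vertices of $A\cup B\cup N(a)\cup N(b)$ other than $a,b$) can each send $3$ edges out of $X$; thus $d^{out}(X)$ can be as large as $30$ and the naive bound only gives $i(G'')\le 15$, nowhere near the $i(G'')\le 3$ you need for $|X|+i(G'')\le 18$. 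The fallback you invoke --- enlarging by $I(G'')$ and its neighbourhood as in Claims~\ref{clm:2} and~\ref{clm:7} --- makes things worse here: the enlarged set has size roughly $|X|+d^{out}(X)-i(G'')$, which can exceed $40$. The enlargement trick only pays off when $d^{out}(X)$ is small to begin with, which is exactly what your choice of $X$ destroys. (The existence of a non-adjacent pair $a,b$ is a second, genuinely unresolved case: if $A$ and $B$ are completely joined the configuration need not be closed, e.g.\ $|A|=2$, $|B|=3$ leaves edges escaping from $B$, and ``exhibit a matching directly'' is not yet an argument.)

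The paper's proof avoids this by never paying for a second fresh edge. It removes only $X=N(u)\cup N(v)$ (at most $6$ vertices, $d^{out}(X)\le 12$), and in the hard case $i(G')\ge 4$ it finds the second matching edge $st$ \emph{inside} $X\cup I(G')$: a vertex $s\in I(G')$ not adjacent to $w$ must have all its neighbours in $N(v)\setminus\{u\}$, hence lies on a $4$-cycle, hence has degree $\ge 3$ by Claim~\ref{clm:7}, and $\{st,uw\}$ with $t\in N(v)$ is induced. Because all four endpoints already lie in $X\cup I(G')$, only $N(X\cup I(G'))$ needs to be added, and the key collapse is that the $\ge 4$ isolated vertices absorb at least $9$ of the $\le 12$ edges leaving $X$, so $d^{out}(X\cup I(G'))\le 3$ and the final set has size at most $14$ with at most $4$ further isolated vertices, meeting the budget of $18$. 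To repair your argument you would need to relocate the second edge into $X\cup I(G')$ in this way; as written, the counting step fails.
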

\begin{proof}
Suppose to the contrary that $u$ is a vertex of degree $2$ in $G$ and $v,w$ are
its neighbors.
Let $X=N(u)\cup N(v)$ and $G'=G-X$. Then $|X|\leq 6$ and $d^{out}(X)\leq 12$.
By Claim \ref{clm:7}, no vertex of degree $2$ is contained in a cycle of length
$4$, thus $w$ is not adjacent to any neighbor of $v$ other than $u$. Now if
$i(G')\leq 3$ then $\nu_s(G)\geq 1+\nu_s(G')\geq
1+\frac{1}{9}(n(G)-6-3)=\frac{1}{9}n(G)$, a contradiction. 

So let us suppose that
$i(G')\geq 4$. Then there is a vertex $s\in I(G')$ that is not adjacent to $w$,
note that then $s$ is adjacent only to vertices in $N(v)-\{u\}$, so $s$ is
contained in a cycle of length $4$. Therefore $d(s)\geq 3$ by Claim \ref{clm:7}.
Let
$t\in N(v)$ be one of its
neighbors. Then $\{st, uw\}$ is an induced matching of $G$ (see Figure
\ref{fig:10}). 

Since vertices in $I(G')$ are adjacent only to vertices
in $X$ we have $d^{out}(X\cup I(G'))\leq d^{out}(X)-\sum_{x\in I(G')}d(x)$.
Since
$d(x)\geq 2$ for all $x\in I(G')$, by Claim \ref{clm:4}, $d(s)\geq 3$ and
$i(G')\geq 4$, we have
$\sum_{x\in I(G')}d(x)\geq3\times 2+3=9$. Thus $d^{out}(X\cup I(G'))\leq 3$.

Let $X'=X\cup I(G')\cup N(X\cup I(G'))$ and $G''=G-X'$. Then one can
easily see
that $d^{out}(X')\leq 3|N(X\cup I(G'))|\leq 3d^{out}(X\cup I(G'))\leq 9$, so
$i(G'')\leq \lfloor
\frac{d^{out}(X')}{2}\rfloor =4$, by Claim \ref{clm:4} and (\ref{eq:dout}). Now
let us upper bound $|X'|$. We have \begin{eqnarray*}
 |X'|&\leq& |X|+i(G')+d^{out}(X\cup I(G')\\
 &\leq &|X|+i(G')+d^{out}(X)-2i(G')\\
 &=&|X|+d^{out}(X)-i(G')\\
 &\leq&6+12-4\\
 &=&14.
\end{eqnarray*}

Noting that every matching $M$ of $G''$ can
be extended to a matching $M\cup \{st,uw\}$ of $G$, we derive
\[
 \nu_s(G)\geq 2+\nu_s(G'')\geq 2+\frac{1}{9}(n(G)-14-4)=\frac{1}{9}n(G),
\]
a contradiction.
\end{proof}
\begin{figure}[!h]
\centering
 \includegraphics[width=.35\linewidth]{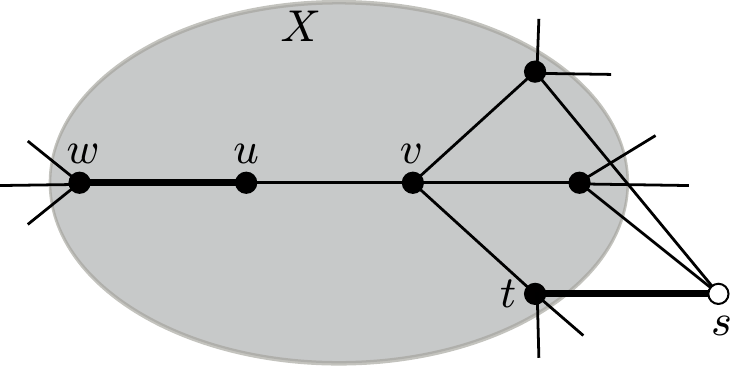}
 \caption{An illustration for the proof of Claim \ref{clm:8}.}
 \label{fig:10}
\end{figure}

\begin{claim}
 \label{clm:9}
 $G$ contains no triangle.
\end{claim}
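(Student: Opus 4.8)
The plan is to follow the local-reduction scheme of Claims \ref{clm:7} and \ref{clm:8}. Suppose for contradiction that $G$ contains a triangle $uvw$, and apply the reduction to the edge $uv$: set $X=N(u)\cup N(v)$ and $G'=G-X$. Since $u,v,w$ are mutually adjacent and $\Delta(G)\le 4$, the set $X$ consists of $u,v,w$ together with at most two further neighbours of each of $u$ and $v$, so $|X|\le 7$. Because every neighbour of $u$ and of $v$ already lies in $X$, each isolated vertex of $G'$ has all of its neighbours in $Y:=\{w\}\cup(N(u)\setminus\{v,w\})\cup(N(v)\setminus\{u,w\})$, a set of at most $5$ vertices. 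Invoking $\delta(G)\ge 3$ (Claim \ref{clm:8}), every such vertex sends at least $3$ edges into $Y$; counting the edges leaving $X$ gives $d^{out}(X)\le 14$ and hence $i(G')\le 4$.

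If $i(G')\le 2$, then extending a maximum induced matching $M$ of $G'$ by $uv$ (legitimate since $u,v$ have no neighbour outside $X$) yields $\nu_s(G)\ge 1+\nu_s(G')\ge 1+\tfrac19(n(G)-|X|-i(G'))\ge\tfrac19 n(G)$, contradicting (\ref{eq:nuG}). So I would then assume $i(G')\in\{3,4\}$ and pass to a second-level reduction, exactly as in Claims \ref{clm:7}--\ref{clm:8}: produce an induced matching of size $2$ whose four vertices all lie in $X\cup I(G')$, remove $X'=X\cup I(G')\cup N(X\cup I(G'))$, and bound $\nu_s(G)\ge 2+\tfrac19(n(G)-|X'|-i(G''))$ with $G''=G-X'$. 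The identity $d^{out}(X\cup I(G'))=d^{out}(X)-\sum_{s\in I(G')}d(s)\le 14-3i(G')$ controls $|X'|\le |X|+i(G')+d^{out}(X\cup I(G'))$ and, via $\delta(G)\ge 3$, $i(G'')\le \tfrac13 d^{out}(X')\le |N(X\cup I(G'))|\le d^{out}(X\cup I(G'))$; the aim is to show $|X'|+i(G'')\le 18$.

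For the size-$2$ matching I would pick two distinct isolated vertices $s_1,s_2\in I(G')$, which are non-adjacent since both lie in $G'$, and match each to a boundary vertex, choosing $t_1\in N(s_1)\setminus N(s_2)$ and $t_2\in N(s_2)\setminus N(s_1)$ inside $Y$ so that $\{s_1t_1,s_2t_2\}$ is induced; when $s_1$ and $s_2$ crowd a common small neighbourhood in $Y$ and these difference sets are too small, one instead uses the triangle edge $uv$ together with an edge of $G'$ as the second matching. In either case all four matched vertices lie in $X\cup I(G')$, so adjoining any induced matching of $G''$ keeps the result induced.

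I expect the main obstacle to be the case $i(G')=3$. There the crude estimates give $|X'|\le 7+3+5=15$ and $i(G'')\le 5$, i.e.\ $|X'|+i(G'')\le 20$, which is two too large. Closing this gap requires a sharper count: exploiting that three isolated vertices already consume at least $9$ of the at most $14$ units of $d^{out}(X)$, that the triangle tends to create edges internal to $Y$ through $w$ and $C:=N(w)\setminus\{u,v\}$ which cut down both $d^{out}(X)$ and $d^{out}(X')$, and that $|N(X\cup I(G'))|$ is governed by the residual boundary degree rather than by the bare bound $5$. Failing a uniform count, I would finish by direct inspection of the few remaining small configurations, where $n(G)$ is forced to be so small that the single edge $uv$ already gives $\nu_s(G)\ge\tfrac19 n(G)$.
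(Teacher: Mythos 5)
Your first-level reduction ($|X|\le 7$, $d^{out}(X)\le 14$, $i(G')\le\lfloor d^{out}(X)/3\rfloor\le 4$, done if $i(G')\le 2$) matches the paper exactly, but the second-level reduction has a genuine gap that you yourself flag and do not close, and the missing idea is precisely the paper's key move. The paper does \emph{not} try to match two isolated vertices to two boundary vertices; it takes one edge of the triangle itself. Since $|N(w)-X|\le 2<3\le i(G')$, there is an $s\in I(G')$ not adjacent to $w$; as $N(s)\subseteq X-\{u,v\}$ and $\delta(G)\ge 3$, $s$ has a neighbour $r\in N(v)-\{u,w\}$, and $\{uw,sr\}$ is an induced matching. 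This choice is what makes the count work: the only vertices that must be added beyond $X\cup I(G')$ are $Y=(N(w)\cup N(r))-(X\cup I(G'))$, and since $w$ already has the two neighbours $u,v$ in $X$ and $r$ has the neighbours $v\in X$ and $s\in I(G')$, one gets $|Y|\le 4$ (not the $5$ you allow for $|N(X\cup I(G'))|$). Then $|X'|\le 11+i(G')$ and $d^{out}(X')\le d^{out}(X)-3i(G')+2|Y|$, so
\[
|X'|+i(G'')\;\le\;11+i(G')+\Big\lfloor\tfrac{d^{out}(X)-3i(G')+2|Y|}{3}\Big\rfloor\;=\;11+\Big\lfloor\tfrac{d^{out}(X)+2|Y|}{3}\Big\rfloor\;\le\;11+\Big\lfloor\tfrac{14+8}{3}\Big\rfloor\;=\;18 ,
\]
the point being that the $+i(G')$ in $|X'|$ cancels exactly against the $-3i(G')$ inside the floor. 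Your cruder bounds $|X'|\le 15$, $i(G'')\le 5$ give $20$, and the ``sharper count'' you gesture at is exactly this cancellation plus the $|Y|\le 4$ bound, neither of which you supply.

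Two further problems with your proposed second matching. First, $\{s_1t_1,s_2t_2\}$ with $t_1\in N(s_1)\setminus N(s_2)$, $t_2\in N(s_2)\setminus N(s_1)$ need not be induced (you also need $t_1t_2\notin E$) and need not exist at all when the isolated vertices share neighbourhoods inside the at most $5$ vertices of $X-\{u,v\}$. Second, your fallback of ``the triangle edge $uv$ together with an edge of $G'$'' is not usable: an edge of $G'$ has both endpoints outside $X$ and (unless both are isolated in $G'$, which is impossible) outside $X\cup I(G')$, so the four matched vertices no longer lie in $X\cup I(G')$ and the whole accounting scheme collapses. So the proposal identifies the right framework but is missing the one idea --- match $uw$ with $sr$ for an isolated $s$ avoiding $w$ --- that makes the arithmetic close.
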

\begin{proof}
Suppose to the contrary that $uvw$ is a triangle in $G$. Let $X=N(u)\cup N(v)$
and $G'=G-X$.
Then $|X|\leq 7$. So if $i(G')\leq 2$ we have $\nu_s(G)\geq 1+\nu_s(G')\geq
1+\frac{1}{9}(n(G)-7-2)=\frac{1}{9}n(G)$, a contradiction. 

Let us suppose
that $i(G')\geq 3$. It is easy to see that $d^{out}(X)\leq 14$. Hence, by
Claim \ref{clm:8} and (\ref{eq:dout}), 
$i(G')=i_3(G')+i_4(G')\leq \lfloor d^{out}(X)/3\rfloor \leq 4$.
Since $|N(w)-X|\leq 2$, and $\delta(G)\geq 3$ by Claim \ref{clm:8}, there exists
a
vertex $s\in I(G')$ such that $s$ is not
adjacent to $w$. Then $s$ is adjacent to a vertex $r\in N(v)-\{u,w\}$ and $\{uw,
sr\}$ is an induced matching of $G$. 

Let $X'=X\cup I(G')\cup N(w)\cup N(r)$ and
$G''=G-X'$. Let $Y=(N(w)\cup N(r))- (X\cup I(G')$. Then by simple counting one
can see that $|Y|\leq 4$ and $|X'|=|X|+ i(G')+|Y|\leq 11+i(G')$.

Since each vertex in $Y$ is adjacent to at least one vertex in $X$ and to at
most $3$ vertices outside $X$, we have
\begin{eqnarray*}
d^{out}(X')&\leq& d^{out}(X\cup I(G')) -|Y|+3|Y|\\
&\leq& d^{out}(X)-3 i(G')+2|Y|,
\end{eqnarray*}
where the last equality follows from \ref{eq:dout2} and Claim \ref{clm:8}.
Therefore, 
\begin{eqnarray*}
|X'|+i(G'')&\leq& 11+i(G')+\Big{\lfloor} \frac{d^{out}(X')}{3} \Big{\rfloor}\\
&\leq& 11+\Big{\lfloor} \frac{d^{out}(X)+2|Y|}{3}\Big{\rfloor}\\
&\leq& 11 +\Big{\lfloor} \frac{14+2\times 4}{3}\Big{\rfloor}\\
&=&18.
\end{eqnarray*}
Since each induced matching $M$ of $G''$ can be extended to an induced
matching $M\cup \{uw,sr\}$ of $G$, it follows that
\[\nu_s(G)\geq 2+\nu_s(G'')\geq 2+\frac{1}{9}(n(G)-18)=\frac{1}{9}n(G),\]
holds, contradicting (\ref{eq:nuG}).

\end{proof}

\begin{claim}
 \label{clm:10}
 If a vertex $u$ is contained in a cycle of length $4$ then $d(u)\geq 4$.
\end{claim}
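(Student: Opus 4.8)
The plan is to argue by contradiction in the local-reduction style of the earlier claims: assume $u$ has degree $3$ and lies on a $4$-cycle $uvwt$, so $N(u)=\{v,t,z\}$ with $z$ the neighbour off the cycle; by Claim~\ref{clm:9} the cycle is chordless, hence $u\not\sim w$ and $v\not\sim t$. I would take a cycle-neighbour of $u$, reduce by deleting $X=N(u)\cup N(v)$, and use that $G$ is triangle-free to get $N(u)\cap N(v)=\emptyset$, so $|X|=3+d(v)\le 7$. The chord $wt$ lies inside $X$, and every vertex of $X$ is adjacent to $u$ or $v$; a degree count then gives $d^{out}(X)\le 3d(v)+1$. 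Since $\delta(G)\ge 3$ by Claim~\ref{clm:8}, each vertex of $I(G')$ sends at least three edges into $X$, so $i(G')\le\lfloor d^{out}(X)/3\rfloor$.

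First I would dispose of the easy regime. If some cycle-neighbour of $u$ has degree $3$, take it as $v$: then $|X|=6$, $d^{out}(X)\le 10$, $i(G')\le 3$, so $|X|+i(G')\le 9$ and extending any induced matching of $G'$ by the edge $uv$ yields $\nu_s(G)\ge 1+\frac19(n(G)-9)=\frac19 n(G)$, a contradiction. Hence I may assume both cycle-neighbours of $u$ have degree $4$; fix such a $v$, so $|X|=7$, $d^{out}(X)\le 13$ and $i(G')\le 4$. The same one-edge bound settles $i(G')\le 2$, so the remaining work is the case $i(G')\in\{3,4\}$.

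For the hard regime I would switch the matched edge to the other cycle-edge $ut$ and hunt for a second, independent edge. The key structural point is that any $s\in I(G')$ has all its (at least three) neighbours in $X\setminus\{u,v,t\}=\{z,w,p,q\}$, where $N(v)=\{u,w,p,q\}$. Since $t$ has at most two neighbours outside $X$, at least one $s\in I(G')$ is non-adjacent to $t$; such an $s$ meets at least three of $z,w,p,q$ and is therefore adjacent to some $r\in\{p,q\}$, and provided $r\not\sim t$ the pair $\{ut,rs\}$ is an induced matching, since $u\not\sim r,s$ and $t\not\sim r,s$ follow from triangle-freeness and the choices made. I would then expand to $X'=X\cup I(G')\cup N(X\cup I(G'))$ and $G''=G-X'$ (which contains $N(t)$, so any induced matching of $G''$ extends by $\{ut,rs\}$). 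Using $d^{out}(X\cup I(G'))\le d^{out}(X)-3i(G')$ from (\ref{eq:dout2}) and $\delta(G)\ge 3$, together with $d^{out}(X')\le 3|N(X\cup I(G'))|$, a short computation gives $|X'|+i(G'')\le 18$ in both cases $i(G')=3$ and $i(G')=4$, so $\nu_s(G)\ge 2+\frac19(n(G)-18)=\frac19 n(G)$, the desired contradiction.

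The main obstacle is the degenerate branch of the hard regime in which no such second edge exists, namely when both $p$ and $q$ are adjacent to $t$, so $N(t)=N(v)$ and every candidate $r$ is adjacent to $t$. Here the out-degree drops to $d^{out}(X)\le 9$, forcing $i(G')=3$ and $d^{out}(X\cup I(G'))=0$, so $G$ is connected on exactly $10$ vertices with a fully determined structure; I would finish by exhibiting an induced matching of size $2$ directly (for instance $\{zs_1,vq\}$ for a suitable isolated vertex $s_1$ with $s_1\not\sim q$), giving $\nu_s(G)\ge 2>\frac{10}{9}=\frac19 n(G)$. I expect this rigid sub-case, together with the careful check that $\{ut,rs\}$ is genuinely induced, to be the only delicate points; everything else is the same bookkeeping as in Claims~\ref{clm:7} and \ref{clm:8}.
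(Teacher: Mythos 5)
Your overall framework (delete $X=N(u)\cup N(v)$ for a degree-$4$ cycle-neighbour $v$, dispose of $i(G')\le 2$ and of a degree-$3$ cycle-neighbour by a one-edge extension, then hunt for a two-edge induced matching and expand to $X'=X\cup I(G')\cup N(X\cup I(G'))$) matches the paper, and your bookkeeping ($|X|\le 7$, $d^{out}(X)\le 13$, $|X'|+i(G'')\le 18$) is correct. The gap is in the dichotomy you use to produce the second edge. You claim that the pair $\{ut,rs\}$ can fail for every choice only ``when both $p$ and $q$ are adjacent to $t$,'' but that is not the only failure mode. An eligible $s\in I(G')$ (one with $s\not\sim t$) has $N(s)\subseteq\{z,w,p,q\}$ with $|N(s)|\ge 3$, so it may well satisfy $N(s)=\{z,w,p\}$ and miss $q$ entirely. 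If every eligible $s$ is of this form and $p\sim t$ while $q\not\sim t$, then every admissible $r$ is adjacent to $t$, yet $N(t)\neq N(v)$, so you land in neither of your two branches. One can write down a locally consistent configuration of this kind (two isolated vertices $s,s'$ with $N(s)=N(s')=\{z,w,p\}$, a third with $N(s'')=\{t,z,q\}$, and $q$ reaching outside $X\cup I(G')$) that is compatible with triangle-freeness and $\delta(G)\ge 3$; note that you cannot exclude it by observing that $s$ itself is a degree-$3$ vertex on the $4$-cycle $swvp$, since that is precisely the statement under proof. So as written the case analysis is incomplete.

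The paper avoids this trap by proving a more flexible auxiliary assertion: \emph{any} induced matching of size $2$ inside $G[X\cup I(G')]$ already forces $\nu_s(G)\ge\frac19 n(G)$, with the same $|X_1|+i(G_1)\le 18$ computation you carry out. It then builds that matching entirely out of the two off-cycle neighbours $t,r$ of $v$ and two isolated vertices: triangle-freeness prevents any $s\in I(G')$ from touching both cycle-vertices of $X-\{u,v\}$, so every $s$ must hit $t$ or $r$, and a pigeonhole argument yields $s_1\sim r$, $s_1\not\sim t$ and $s_2\sim t$, $s_2\not\sim r$, giving the induced matching $\{s_1r,s_2t\}$ without ever anchoring an edge at $u$. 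Your insistence on using the cycle edge $ut$ is what restricts the admissible partners to $\{p,q\}$ and creates the uncovered case; if you first establish the ``any $2$-matching in $G[X\cup I(G')]$ suffices'' assertion (which your own expansion computation essentially already proves), the problematic configuration is dispatched immediately, e.g.\ by the induced matching $\{ws,qs''\}$ it contains. Your treatment of the genuinely rigid branch $N(t)=N(v)$, forcing $n(G)=10$ and an explicit matching, is fine but becomes unnecessary under the paper's formulation.
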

\begin{proof}
Suppose to the contrary that $u$ is contained in a cycle $C=uvxy$ of length
$4$ and $d(u)=3$.
Let $v$ be the neighbor of $u$ that is not contained in $C$. Let $X=N(u)\cup
N(v)$
and $G'=G-X$. Then $|X|\leq 7$ and $d^{out}(X)\leq 13$. Hence, if $i(G')\leq 2$
then $\nu_s(G)\geq 1+\nu_s(G')\geq \frac{1}{9}n(G)$ holds, a contradiction. So
let us suppose that $i(G')\geq 3$. 

If $d(v)=3$ then $|X|=6$ and $d^{out}(X)\leq 10$, thus $i(G')\leq 3$. Hence
$\nu_s(G)\geq 1+\nu_s(G')\geq \frac{1}{9}n(G)$ holds, again a contradiction. So
we may suppose that
$d(v)=4$ and $N(v)=\{u,x,t,r\}$.

We will use the following assertion.

\begin{assert}
 \label{as:1}
 If there is an induced matching in $G[X\cup I(G')]$ then $\nu_s(G)\geq
\frac{1}{9}n(G)$.
\end{assert}

\begin{proof}
Let $\{ab,cd\}$ be an induced matching in $G[X\cup I(G')]$. Let $X_1=X\cup
I(G')\cup N(\{a,b,c,d\})$ and $Y_1=N(\{a,b,c,d\})-(X\cup I(G')$. Then
since each vertex in $Y_1$ is adjacent to
a vertex in $\{a,b,c,d\}\subset X\cup I(G')$ and $\delta(G)\geq 3$ by Claim
\ref{clm:8}, we have $|Y_1|\leq d^{out}(X\cup
I(G'))-3i(G')\leq 13- 3\times 3=4$. Using a similar argument as in the proof of
Claim \ref{clm:9}, we obtain 
\[
 d^{out}(X_1)\leq d^{out}(X)-3i(G')-|Y_1|+3|Y_1|=d^{out}(X)-3i(G')+2|Y_1|,
\]
and 
\[
 |X_1|=|X|+|I(G')|+|Y_1|.
\]
Therefore,
\begin{eqnarray*}
 |X_1|+i(G_1)&\leq&
|X|+i(G')+|Y_1|+\frac{d^{out}(X_1)}{3}\\
&=&|X|+\frac{5}{3}|Y_1|+\frac{d^{out}(X)}{3}\\
&\leq&7+\frac{5}{3}\times 4+\frac{13}{3}\\
&=&18.
\end{eqnarray*}
Hence, since for each induced matching $M$ of $G''$, $M\cup\{ab,cd\}$ is an
induced matching of $G$, we have
\[
 \nu_s(G)\geq 2+\nu_s(G_1)\geq 2+\frac{1}{9}(n(G)-18)=\frac{1}{9}n(G),
\]
a contradiction.
\end{proof}

Now we are ready to complete the proof of Claim \ref{clm:10}. Since $i(G')\geq
3$ and $|N(t)-X|, |N(r)-X|\leq 2$, by observing
that each vertex in $I(G')$ must be adjacent to either $r$ or $t$, we deduce
that there are two vertices $s_1,s_2\in I(G')$ such that $s_1$ is adjacent to
$r$ but not to $t$ and $s_2$ is adjacent to $t$ but not to $r$. Then since $r$
and $t$ are not adjacent by Claim \ref{clm:9}, $\{s_1r,s_2t\}$ is an induced
matching of $G[X\cup I(G')]$. So Assertion \ref{as:1} implies contradiction.

\end{proof}
\begin{claim}
 \label{clm:11}
 There is no cycle of length $4$.
\end{claim}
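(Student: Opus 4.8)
The plan is to proceed exactly as in Claims~\ref{clm:9} and \ref{clm:10}: assume for contradiction that $G$ contains a cycle of length $4$, pick a suitable local set $X$ to delete, bound the number of isolated vertices that deletion can create, and then extend an induced matching of the remaining graph by two carefully chosen edges so that the arithmetic $\nu_s(G)\geq 2+\frac{1}{9}(n(G)-|X'|-i(G''))\geq\frac{1}{9}n(G)$ closes. By Claim~\ref{clm:10}, every vertex of a $4$-cycle $C=uvxy$ now has degree $4$, and by Claim~\ref{clm:9} there are no triangles, so the local neighborhoods are large but clean; this is the structural input I would exploit.

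First I would let $C=uvxy$ be a $4$-cycle and set $X=N(u)\cup N(v)$. Since $d(u)=d(v)=4$ and $u,v$ are adjacent with $C$ triangle-free, I get $|X|\leq 8$ and $d^{out}(X)\leq 12$ (the two edges $uv$ and the two edges of $C$ incident to $u,v$ that stay inside $X$ reduce the outgoing count). With $\delta(G)\geq 3$ from Claim~\ref{clm:8}, inequality (\ref{eq:dout}) gives $i(G')\leq\lfloor d^{out}(X)/3\rfloor\leq 4$. If $i(G')$ is small enough (say $\leq 2$ or $3$) the single-edge extension $M\cup\{uv\}$ already finishes, so the real work is the case $i(G')=4$. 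I would then reuse the machinery of Assertion~\ref{as:1}: show that among the isolated vertices of $G'$, which are all forced to attach to the few ``outer'' neighbors of $u$ and $v$, one can locate an induced matching $\{ab,cd\}$ inside $G[X\cup I(G')]$, because the outer neighbors are non-adjacent (no triangles) and each has bounded degree, so the isolated vertices cannot all crowd onto one vertex. Feeding such a matching into the second-neighborhood deletion bound $|X_1|+i(G_1)\leq 18$ contradicts (\ref{eq:nuG}).

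The main obstacle I expect is the bookkeeping in the case $i(G')=4$ when the isolated vertices are distributed so as to \emph{avoid} creating any induced matching in $G[X\cup I(G')]$, i.e.\ when they all share neighbors in a way that makes the extension to two independent edges delicate. Here I would argue by counting: the isolated vertices live on the at most four outer neighbors $\{x,y,t,r\}$ of the $4$-cycle (those in $X\setminus\{u,v\}$ not equal to the cycle's own vertices), each of degree $4$ and pairwise non-adjacent by Claim~\ref{clm:9}; with four isolated vertices and the pigeonhole/degree constraints one can always select two of them adjacent to two distinct, non-adjacent outer vertices, yielding the desired induced matching exactly as in the final paragraph of Claim~\ref{clm:10}. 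Once the induced matching $\{ab,cd\}\subseteq G[X\cup I(G')]$ is produced, Assertion~\ref{as:1} (or a verbatim repetition of its computation, since the bounds $|X|\leq 8$, $d^{out}(X)\leq 12$, $i(G')\leq 4$ are at least as favorable as those used there) delivers the contradiction and completes the claim.
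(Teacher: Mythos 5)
Your setup (take $X=N(u)\cup N(v)$, bound $i(G')$, and try to extend by two edges via an analogue of Assertion~\ref{as:1}) is the same as the paper's, but the proposal has one fatal gap and a couple of smaller miscounts.

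The fatal gap is your final step: you claim that when $i(G')$ is large, a pigeonhole argument ``exactly as in the final paragraph of Claim~\ref{clm:10}'' always produces an induced matching of size $2$ in $G[X\cup I(G')]$. This is false, and it is false precisely at the exceptional graph: $C_{2,5}$ is $4$-regular, triangle-free, contains $4$-cycles, and satisfies every structural constraint you have available at this point, yet $\nu_s(C_{2,5})=1$, so no such induced matching exists there. Any purely local counting argument that ``always'' finds the matching would prove $\nu_s(C_{2,5})\geq 2$, which is wrong. The paper's proof must therefore split into two cases: if $G[X\cup I(G')]$ contains an induced matching of size $2$, an assertion analogous to Assertion~\ref{as:1} finishes (and even there, the bound $|Y_1|\leq 4$ requires a genuine case analysis on how $\{a,b,c,d\}$ meets $I(G')$ and $\{u,v\}$, not just the remark that outer vertices are non-adjacent); if it does not, a lengthy edge-by-edge reconstruction shows that $G$ is forced to be exactly $C_{2,5}$, contradicting the hypothesis $G\neq C_{2,5}$. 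That second case is the bulk of the paper's proof of this claim and is entirely missing from your proposal. This is also the only place in the whole argument where the exclusion of $C_{2,5}$ is actually used, so it cannot be waved away.

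Two smaller corrections. First, $d^{out}(X)\leq 16$, not $12$: writing $N(v)=\{u,x,a,b\}$ and $N(u)=\{v,y,c,d\}$, the vertices $u,v$ contribute nothing, $x$ and $y$ contribute at most $2$ each, and $a,b,c,d$ contribute at most $3$ each, giving $2+2+4\cdot 3=16$; moreover the isolated vertices attach to the up to \emph{six} outer vertices $x,y,a,b,c,d$, not four. Second, with $d^{out}(X)\leq 16$ and only $\delta(G)\geq 3$ you get $i(G')\leq 5$ and the arithmetic in the assertion (dividing $d^{out}(X_1)$ by $3$) does not close at $18$. You need the intermediate observation (\ref{eq:2}) that every vertex of $I(G')$ lies on a $4$-cycle (its $\geq 3$ neighbours all sit in $X$) and hence has degree $4$ by Claim~\ref{clm:10}; only then do the divisions by $4$ give $|X_1|+i(G_1)\leq 8+\tfrac{3}{2}\cdot 4+\tfrac{16}{4}=18$.
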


\begin{proof}
Suppose that $C=uvxy$ is a cycle of length $4$ in $G$. Let $X=N(u)\cup N(v)$
and $G'=G-X$. Then $|X|\leq 8$ and $d^{out}(X)\leq 16$. If $i(G')\leq 1$ then
a similar argument as in previous claims yields $\nu_s(G)\geq \frac{1}{9}n(G)$,
a contradiction. Therefore, we may suppose that $i(G')\geq 2$. 

Since there are no three vertices in $X$ with pairwise distance $3$ and each
vertex in $I(G')$ has degree at least $3$ we obtain that each vertex of $I(G')$
lies on a
cycle of length $4$. Therefore, by Claim \ref{clm:10} we have
\begin{equation}
\label{eq:2}
 \text{all vertices in $I(G')$ have degree $4$.}
\end{equation}

We will use the following assertion, which is similar to Assertion \ref{as:1}.

\begin{assert}
 If there is an induced matching in $G[X\cup I(G')]$ then
$\nu_s(G)\geq \frac{1}{9}n(G)$.
\end{assert}
\begin{proof}
Let $\{ab,cd\}$ be an induced matching in $G[X\cup I(G')]$. Let $X_1=X\cup
I(G')\cup N(\{a,b,c,d\})$ and $Y_1=N(\{a,b,c,d\})-(X\cup I(G'))$. 

We first prove that if $|Y_1|\leq 4$ then $\nu_s(G)\geq \frac{1}{9}n(G)$, a
contradiction. In fact, a similar argument as in the proof of Assertion
\ref{as:1} yields 
\[
 d^{out}(X_1)\leq d^{out}(X)-4i(G')-|Y_1|+3|Y_1|=d^{out}(X)-4i(G')+2|Y_1|,
\]
 where the multiplicity $4$ for $i(G')$ is due to (\ref{eq:2}).  We also
have 
\[
 |X_1|=|X|+|I(G')|+|Y_1|.
\]
Therefore,
\begin{eqnarray*}
 |X_1|+i(G_1)&\leq&
|X|+i(G')+|Y|+\frac{d^{out}(X_1)}{4}\\
&=&|X|+\frac{3}{2}|Y_1|+\frac{d^{out}(X)}{4}\\
&\leq&8+\frac{3}{2}\times 4+\frac{16}{4}\\
&=&18.
\end{eqnarray*}
It follows that 
\[
 \nu_s(G)\geq 2+\nu_s(G_1)\geq 2+\frac{1}{9}(n(G)-18)=\frac{1}{9}n(G)
\]
holds.

It remains to prove that $|Y_1|\leq 4$. Indeed, if
$|\{a,b,c,d\}\cap I(G')|\geq  2$, then it is easy to see that
$|Y_1|=|N(\{a,b,c,d\}-(X\cup I(G'))|\leq 4$. On the other hand, if
$\{a,b,c,d\}\subset X$ then a simple counting shows that $d^{out}(X)\leq 12$
and hence $|Y_1|\leq d^{out}(X)-4i(G')\leq 4$, where the coefficient $4$ of
$i(G')$ is due to (\ref{eq:2}). Thus we may suppose that $a\in
I(G')$ and   $\{b,c,d\}\subset X$, note that then $b\notin \{u,v\}$. Let $t\in
I(G')-s$. First consider the case $\{c,d\}\cap \{u,v\}=\emptyset$. 
Since $d(s)=d(t)=4$ and $|X-\{u,v,b,c,d\}|=3$ we have that both $s$ and $t$ have
a neighbor in $\{b,c,d\}$. This implies that $|Y_1| =|N(\{a,b,c,d\})-(X\cup
I(G')|\leq 4$, as desired. Next, consider the case $\{c,d\}= \{u,v\}$, it is
easy to see that then $|Y_1|\leq 2$. Finally, consider the case $|\{c,d\}\cap
\{u,v\}|=1$, say $d\in \{u,v\}$ and $c\notin \{u,v\}$. Then since
$|N(t)\cap\{x,y\}|\leq 1$, as $G$ contains no triangle, $d(t)=4$ and
$|X-\{u,v,x,y\}|=4$, we have that $t$ must be adjacent to $b$ or $c$. Hence
again we obtain $|Y-1|\leq 4$, as desired. 
\end{proof}

Now suppose that there is no induced matching of size $2$ in $G[X\cup I(G')]$.
We prove that $G=C_{2,5}$. Let $N(v)=\{u,x,a,b\}$ and $N(u)=\{v,y,c,d\}$. Since
each vertex $s$ in $I(G')$ is not adjacent to both $x$ and $y$ at the same
time as $G$ has no triangle, we may assume without loss of generality that
there is an
$s\in I(G')$ that is adjacent to $a,c$ but not to $y$. Then since $\{sa,yu\}$ is
not an induced matching of $G[X\cup I(G')]$ and since $ua\notin E$ by Claim
\ref{clm:9}, we must have $ya\in E$. (See Figure \ref{fig:1} for an
illustration.) Now by considering pair $\{du,sa\}$, since $ua,us\notin E$, we
see that either $da\in E$ or $sd\in E$, but not both of them belong to $E$.

\begin{figure}[!h]
\centering
 \includegraphics[width=.6\linewidth]{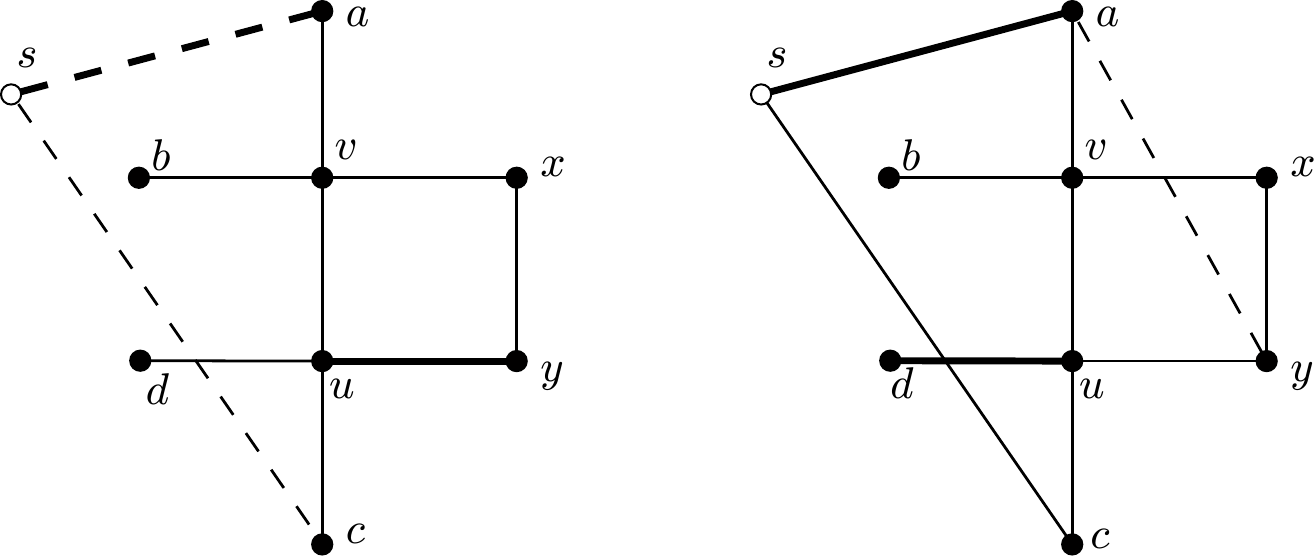}
 \caption{An illustration of Case 1; dashed edges indicate newly
``found'' edges and thick edges indicate considered pair in each step.}
 \label{fig:1}
\end{figure}
\medskip

\noindent \underline{Case 1}: {\em $da\in E$ and $sd\notin E$}.\\
\indent \indent (See an illustration for this case in Figure \ref{fig:2})

Since $d(s)=4$ we derive that $sx, sb\in E$. Consider pair $\{sx,ud\}$, we see
that $dx\in E$ since $ux,us,ud\notin E$. 
  Now let $t$ be a vertex in $I(G')-\{s\}\neq \emptyset$, due to
the assumption that $I(G')\geq 2$. Then since $d(a)=d(x)=4$, we know that
$N(a)=\{s,t,x,y\}$, $N(x)=\{b,d,y,v\}$ and $ta\notin E$. Therefore,
$N(t)\subseteq \{b,c,d,x,y\}$ holds.  However, by considering pair $\{sb,ud\}$
we derive that $bd\in E$ which implies that $G$ contains a triangle $tbd$,
contradicting Claim \ref{clm:9}.
\begin{figure}
\centering
 \includegraphics[width=.9\linewidth]{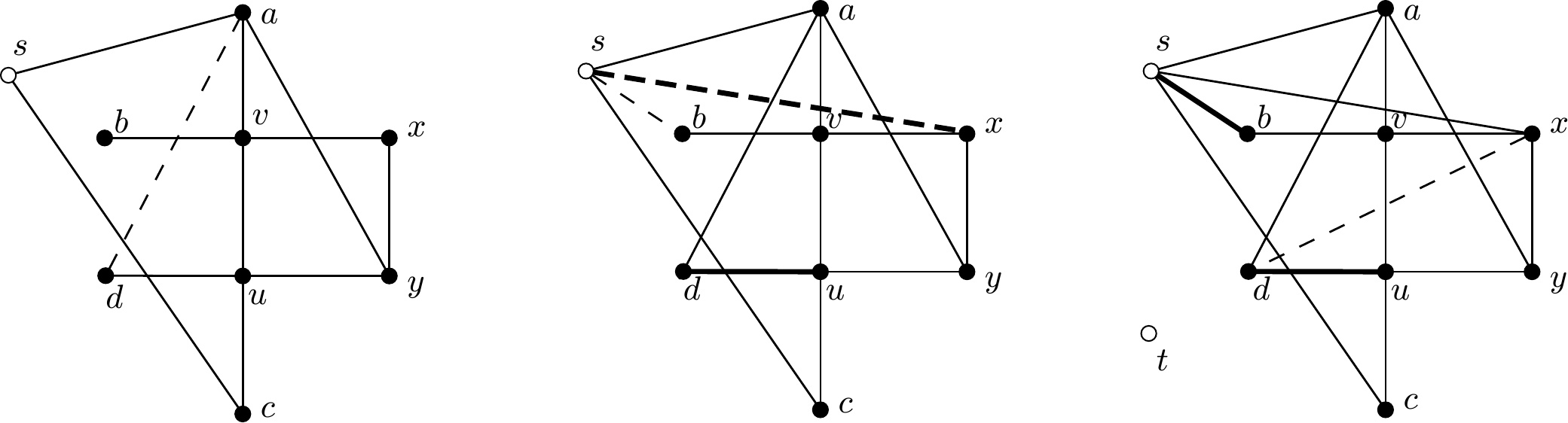}
 \caption{An illustration for Case 1; dashed edges indicate newly
``found'' edge and thick edges indicate considered pair in each step.}
 \label{fig:2}
\end{figure}
\medskip

\noindent \underline{Case 2}: {\em $sd\in E$ and $da\notin E$}. 

Consider pair $\{sc,bv\}$, using the fact that $G$ contains no triangle
from Claim \ref{clm:9}, we see that either $bs$ or $bc$ is in $E$ but not both
of them. Hence we consider these cases separately.\medskip

\noindent \underline{Case 2.1}: {\em $bc\in E$ and $bs\notin E$}. \\
\indent \indent(See Figure \ref{fig:3} for an illustration).

Then, since $d(s)=4$ and $sy\notin E$, we have that $sx\in E$. Also we have
$ac\notin E$, otherwise $asc$ is a triangle in $G$, contradicting Claim
\ref{clm:9}. By considering pair $\{bc,ay\}$ we obtain that $by\in E$. Now let
$t \in I(G')-\{s\}\neq \emptyset$. Since $N(y)=\{x,u,a,b\}$ and $tb,tc$ do
not both belong to $E$, we must have $tx,td,ta\in E$. Considering pair
$\{uc,tx\}$,
noting that $xc\notin E$ since otherwise $scx$ is a triangle in $G$, we derive
that $tc\in E$. Finally, by considering pair $\{by,sd\}$, we obtain that $bd\in
E$.
However then $V(G)=X$ and $G=C_{2,5}$, where the pairs corresponding to the
vertices of $C_5$ are $\{a,x\},\{v,y\},\{b,u\},\{c,d\},\{s,t\}$, a
contradiction to the assumption that $G\neq C_{2,5}$. 

\begin{figure}[!h]
 \centering
 \includegraphics[width=.9\linewidth]{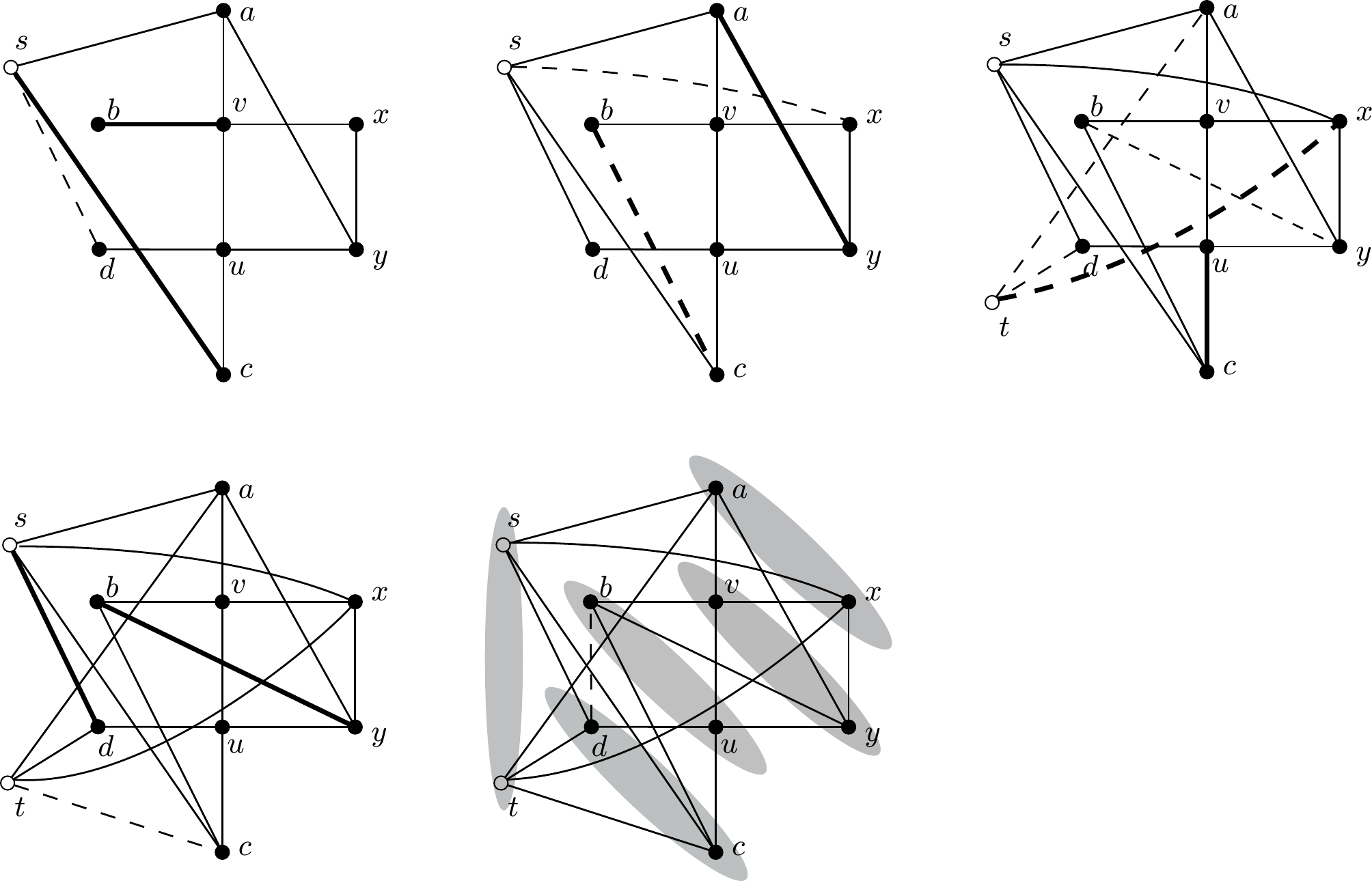}
 \caption{An illustration for Case 2.1; dashed edges indicate newly
``found'' edge and thick edges indicate considered pair in each step.}
 \label{fig:3}
\end{figure}
\medskip

\noindent \underline{Case 2.2}: {\em $bs\in E$ and $bc\notin E$}. \\
\indent \indent(See Figure \ref{fig:4} for an illustration.)

Then considering pair $\{sb,uy\}$ we derive that $yb\in E$. Now let $t\in
I(G')-\{s\}$, then we may assume without loss of generality that $ta,tc\in E$.
Considering pair $\{tc,by\}$, we have that $tb\in E$. Considering pair
$\{ta,ud\}$, we
obtain that $td\in E$. By considering pair $\{xy,tc\}$ we deduce that $xc\in
E$. Finally, by considering pair $\{td,xv\}$ we derive that $xd\in E$.
Therefore, $V(G)=X\cup I(G')$ and $G=C_{2,5}$ where the pairs corresponding to
the vertices of $C_5$ are $\{x,u\},\{v,y\},\{a,b\}\{c,d\},\{s,t\}$, a
contradiction to the assumption that $G\neq C_{2,5}$.

\begin{figure}[!ht]
 \centering
 \includegraphics[width=.9\linewidth]{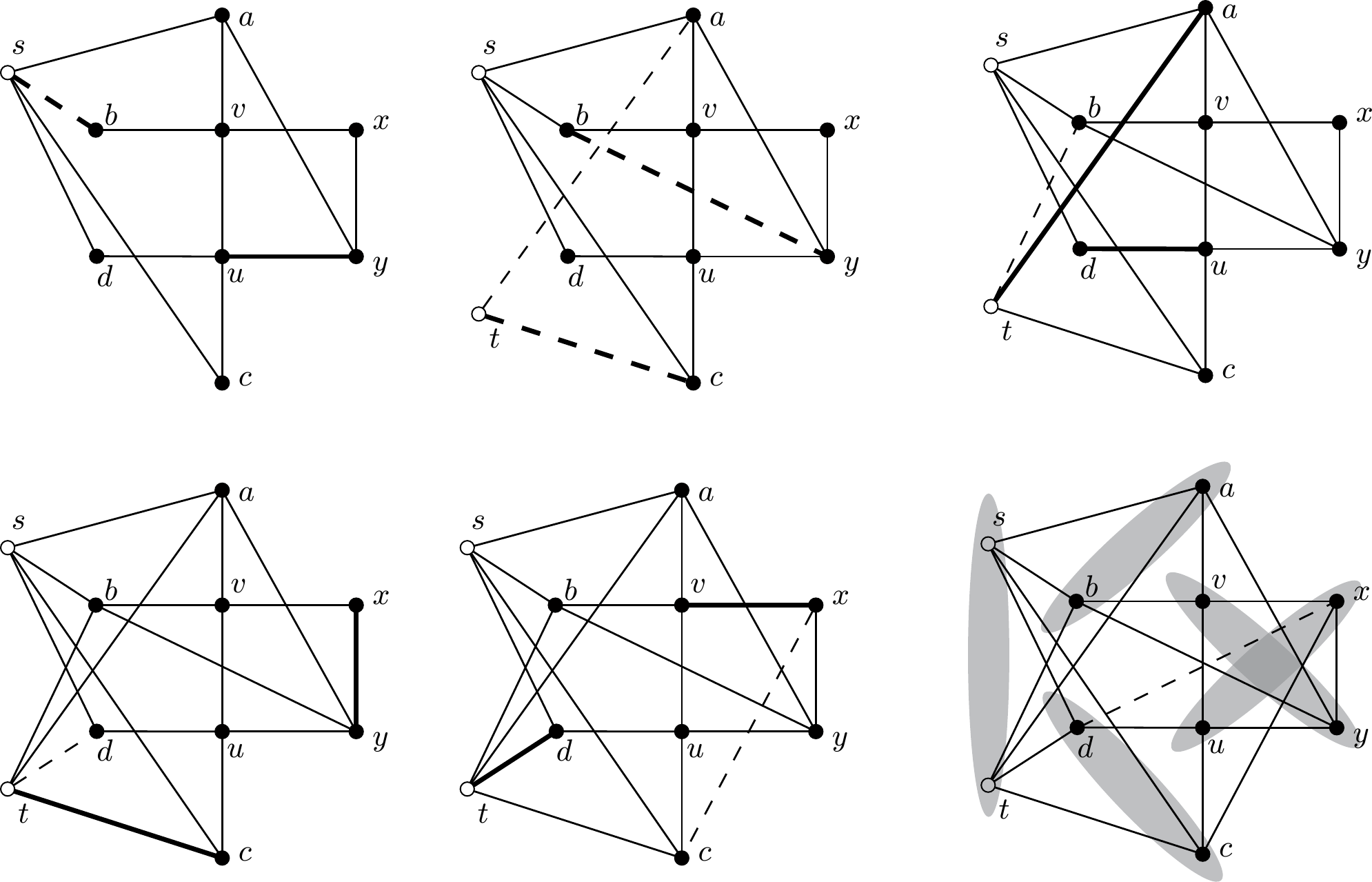}
 \caption{An illustration for Case 2.2; dashed edges indicate newly
``found'' edges and thick edges indicate considered pair in each step.}
 \label{fig:4}
\end{figure}
\end{proof}

The example in Figure \ref{fig:12}, due to Joos \cite{Joos14}, shows that the
lower bound in Theorem \ref{thm:main} is tight. It is easy to see that the
proof of Theorem \ref{thm:main} implies a polynomial time algorithm to find an
induced matching of size $\frac{1}{9}n(G)$. 
\begin{figure}[!ht]
 \centering
 \includegraphics[width=.20\linewidth]{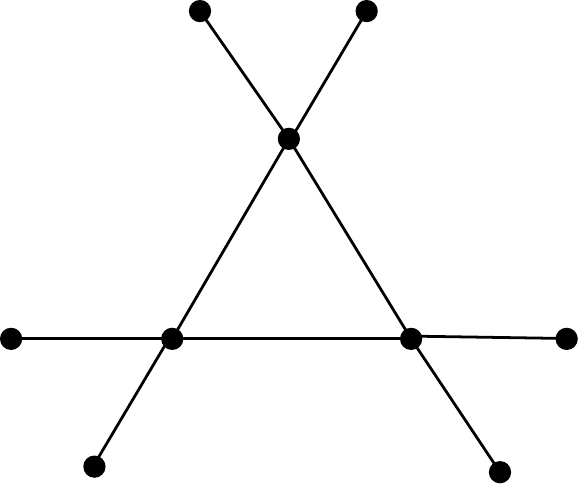}
 \caption{A tight example to Theorem \ref{thm:main}.}
 \label{fig:12}
\end{figure}

\end{document}